\documentclass[11pt]{article}

\usepackage{fullpage}
\usepackage{amsmath}
\usepackage{amssymb}
\usepackage{amsthm}
\usepackage{graphicx}
\usepackage{psfrag}
\usepackage{verbatim}
\usepackage{url}
\usepackage{float}
\usepackage{color}

\textheight 9.0in
\def\E{\mathop{\mathbb{E}}}
\def\R{\mathbb{R}}

\def\eps{\epsilon}
\def\del{\delta}

\newcommand{\hide}[1]{}

\def\1{\mathbf{1}}

\def\lam {\lambda}

\def\var{\text{var}}


\newcommand{\on}{\{\pm 1\}}

\newtheorem{thm}{Theorem}
\newtheorem{lem}{Lemma}

\newtheorem{defn}{Definition}

\newtheorem{prop}{Proposition}

\usepackage{caption}
\usepackage{subcaption}

\begin{document}

\title{Subsampled Power Iteration: a Unified Algorithm for Block Models and Planted CSP's}
\author{Vitaly Feldman\thanks{IBM Research - Almaden}
\and Will Perkins\thanks{University of Birmingham and the IMA.}
\and Santosh Vempala\thanks{Georgia Tech. Supported in part by NSF award CCF-1217793.}
}
\date{}
\maketitle

\begin{abstract}
We present an algorithm for recovering planted solutions in two well-known models, the stochastic block model and planted constraint satisfaction problems, via a common generalization in terms of random bipartite graphs. Our algorithm matches up to a constant factor the best-known bounds  for the number of edges (or constraints) needed for perfect recovery and its running time is linear in the number of edges used. The time complexity is significantly better than both spectral and SDP-based approaches.

The main contribution of the algorithm is in the case of unequal sizes in the bipartition that arises in our reduction from the planted CSP.  Here our algorithm succeeds at a significantly lower density than the spectral approaches, surpassing a barrier based on the spectral norm of a random matrix.

Other significant features of the algorithm and analysis include (i) the critical use of power iteration with subsampling, which might be of independent interest; its analysis requires keeping track of multiple norms of an evolving solution (ii) the algorithm can be implemented statistically, i.e., with very limited access to the input distribution (iii) the algorithm is extremely simple to implement and runs in linear time, and thus is practical even for very large instances.
\end{abstract}

\thispagestyle{empty}
\newpage

\setcounter{page}{1}

\section{Introduction}

Partitioning a graph into parts based on the density of the edges within and between the parts is a fundamental algorithmic task both in its own right as a method of clustering data into similar pieces, and as a powerful subroutine of divide-and-conquer algorithms.  There are many choices for the number of parts required and the measure of the quality of a partition, and different choices give rise to algorithmic problems such as Max Clique, Max Cut, Uniform Sparsest Cut, and Min Bisection.

While finding an optimal graph partition is often an NP-hard problem in the worst case, the average-case study of graph partitioning problems is particularly rich, as the underlying distributions come from natural and widely studied models of random graphs (we review the previous work in Section~\ref{sec:previous}).

The simplest model is the stochastic block model: partition a set of vertices into two equal parts $A$ and $B$, and add edges independently, with probability $p$ for an edge within a part, and $q \ne p$ for a crossing edge.  The algorithmic task is to recover the partition given the random graph.  Generalizations include parts of unequal size, more than two parts, and more than two edge probabilities.

Another broad and fundamental class of algorithmic problems is the class of boolean Constraint Satisfaction Problems (CSP's, defined precisely below).  The average-case complexity of $k$-CSP's is a large area of research that intersects cryptography, computational complexity, probabilistic combinatorics and statistical physics.  In the planted $k$-SAT problem each constraint is a disjunction of $k$ literals, variables or their negations, eg. $\{ \overline x_5, x_6, \overline x_{10} \}$ and is referred to as $k$-clause.
A random instance of this problem is produced by choosing a random and uniform assignment $\sigma$ and then selecting $k$-clauses at random independently (but not necessarily uniformly) from the set of $k$-clauses satisfied by $\sigma$.  This distribution is guaranteed to have at least one satisfying assignment, $\sigma$. In the `noisy' version of the problem unsatisfied clauses are also included with some probability.
The algorithmic task is to recover the planted assignment $\sigma$.  An additional model of planted CSP's we consider is Goldreich's pseudorandom generator \cite{goldreich2000candidate} that has been studied in cryptography. We describe it in more detail below.

\subsection{Our results and techniques}
\label{sec:ourResults}

We propose a natural bipartite stochastic block model that generalizes the classic stochastic block model defined above. The key motivation for the study of this model is that the two types of planted $k$-CSP's  can be reduced to our block model, thus unifying graph partitioning and planted CSP's into one problem. We then give an algorithm for solving random instances of the model.

The model begins with two vertex sets, $V_1$ and $V_2$ (of possibly unequal size), each with a balanced partition, $(A_1, B_1)$ and $(A_2, B_2)$ respectively.  Edges are added independently at random between $V_1$ and $V_2$ with probabilities that depend on which parts the endpoints are in: edges between $A_1$ and $A_2$ or $B_1$ and $B_2$ are added with probability $\del p$, while the other edges are added with probability $(2-\del)p$, where $\del \in [0,2]$ and $p$ is the overall edge density. To obtain the stochastic block model we can identify $V_1$ and $V_2$.  To reduce planted CSP's to this model, we first reduce the problem to an instance of noisy $r$-XOR-SAT, where $r$ is the complexity parameter of the planted CSP distribution defined in \cite{feldman2013complexity} (see Sec.~\ref{sec:plantedCSPsec} for details). We then identify $V_1$ with literals, and $V_2$ with $(r-1)$-tuples of literals, and add an edge between literal $l \in V_1$ and tuple $t \in V_2$ when the $r$-clause consisting of their union appears in the formula.  The reduction leads to a bipartition with  $V_2$  much larger than $V_1$.

Our algorithm is based on applying power iteration with a sequence of matrices subsampled from the original adjacency matrix. This is in contrast to previous algorithms that compute the eigenvectors (or singular vectors) of the full adjacency matrix. Our algorithm has several advantages.  Such an algorithm, for the special case of square matrices, was previously proposed and analyzed in a different context by Korada et al \cite{korada2011gossip}.

\begin{itemize}
\item Up to a constant factor, the algorithm matches the best-known (and in some cases the best-possible) edge or constraint density needed for complete recovery of the planted partition or assignment. The algorithm for planted CSP's finds the planted assignment using $O(n^{r/2} \cdot \log n)$ clauses for a clause distribution of complexity $r$ (see Sec.~\ref{sec:plantedCSPsec} for the formal definition),
        nearly matching computational lower bounds for SDP hierarchies \cite{o2013goldreich} and the class of statistical algorithms \cite{feldman2013complexity}.
\item The algorithm is fast, running in time linear in the number of edges or constraints used, unlike other approaches that require computing eigenvectors or solving semi-definite programs.
\item The algorithm is conceptually simple and very easy to describe and implement.  In fact it can be implemented in the statistical query model, with very limited access to the input graph \cite{feldman2013complexity}. 
\item It is based on the idea of iteration with subsampling which may have further applications in the design and analysis of algorithms.

\item Most notably, the algorithm succeeds where generic spectral approaches fail.  For the case of the planted CSP, when $|V_2| \gg |V_1|$, our algorithm succeeds at a polynomial factor sparser density than the approaches of McSherry \cite{mcsherry2001spectral}, Coja-Oghlan \cite{coja2010graph}, and Vu \cite{vu2014simple}. The algorithm succeeds despite the fact that the `energy' of the planted vector with respect to the random adjacency matrix is far below the spectral norm of the matrix.  In previous analyses, this was believed to indicate failure of the spectral approach.  For a full discussion, see Section \ref{sec:spectral}.
\end{itemize}

\paragraph{}

The remainder of the paper is organized as follows:
\begin{itemize}
\item In Section \ref{sec:previous}, we review previous work.
\item  In Section \ref{sec:main} we formally define the model and present the main theorems.
\item In Section \ref{sec:algorithms} we describe the algorithm and analyze its performance.
\item In Section \ref{sec:CSPreduce} we give the reduction of the planted $k$-CSP problems to the bipartite stochastic block model.
\item  In Section \ref{sec:spectral} we compare our algorithm to other spectral approaches.
\item In Section \ref{sec:analysis} we present full details of the analysis.
\end{itemize}


\subsection{Related work}
\label{sec:previous}
\subsubsection*{Planted partitioning}



The stochastic block model was introduced in \cite{holland1983stochastic}.  Boppana \cite{boppana1987eigenvalues} gave a spectral-based algorithm for the model, and Jerrum and Sorkin  \cite{jerrum1998metropolis} gave a Metropolis approach. Dyer and Frieze \cite{dyer1989solution} and Blum and Spencer \cite{blum1995coloring} give algorithms for the related planted $k$-coloring model in which the vertex set is partitioned into $k$ equal parts and then edges crossing the partition are added independently at random while edges within the partition are forbidden.  Alon and Kahale \cite{alon1997spectral} gave a spectral algorithm for this problem.

 Later algorithms \cite{condon2001algorithms,feige2005spectral,coja2006spectral,bottcher2005coloring,coja2009finding}  improved either the running time or the density at which the algorithms succeed.  Of particular note is McSherry's algorithm \cite{mcsherry2001spectral} which is  based on a low-rank projection and is a generic algorithm for many planted partitioning problems, including the stochastic block model, the planted coloring problem, and the planted clique problem.  Coja-Oghlan \cite{coja2010graph} gave a refined general purpose partitioning algorithm and showed that the planted partition in the stochastic block model can be partially recovered when the average degree is just a constant.  Vu \cite{vu2014simple} recently gave a simple SVD-based general partitioning algorithm.

 While all of the above works seek to recover the partition at as low a density as possible, only recently have sharp thresholds for the possibility of recovery been identified.  Based on ideas from statistical physics, Decelle et al. \cite{decelle2011asymptotic} conjectured that in fact there is a sharp threshold for efficient recovery in the stochastic block model: if $p=a/n, q=b/n$, and $(a-b)^2 < 2(a+b)$ then any non-trivial recovery of the planted partition is impossible, while if $(a-b)^2 > 2(a+b)$ then there is an efficient algorithm (polynomial in the size of the graph) that gives a partition with significant correlation to the planting. Mossel, Sly, and Neeman proved the lower bound \cite{mossel2012stochastic}, and then Massoulie \cite{massoulie2014community} and Mossel, Neeman, Sly \cite{mossel2013proof} independently analyzed algorithms proving the upper bound. See also \cite{nadakuditi2012graph,krzakala2013spectral} for more on related algorithms.  Recent work has found algorithms that succeed at the optimal threshold for complete recovery \cite{abbe2014exact, mossel2014consistency}.

\subsubsection*{Planted $k$-CSP's}

A width-$k$ CSP is defined by a set of $m$ predicates denoted by $P_1,\ldots, P_m$ and a set of $m$ $k$-tuples of boolean variables from the set $V = \{x_1, \dots, x_n \}$ denoted by $C_1,\ldots,C_m$. Each predicate $P_i$ is a function from $\{ \pm 1 \}^k$ to $\{ \pm 1\}$. Identifying $+1$ with TRUE and $-1$ with FALSE, a predicate $P_i$ is satisfied by an assignment $\sigma : V \to \{ \pm 1\}$ if the evaluation of the predicate $P_i$ on the values assigned by $\sigma$ to the $k$-tuple of variables $C_i=(x_{i_1},\ldots,x_{i_k})$ is TRUE. Given such a $k$-CSP the algorithmic task is to find an assignment $\sigma$ that maximizes the number of satisfied constraints.

It was noted in \cite{barthel2002hiding} that drawing satisfied $k$-SAT clauses uniformly at random from all those satisfied by $\sigma$  does not result in a difficult algorithmic problem even if the number of observed clauses is relatively small (simply taking the majority vote for each variable suffices; see \cite{berthet2014optimal} for optimal statistical tests in this setting). However, by changing the proportions of clauses depending on the number of satisfied literals under $\sigma$, one can create a more challenging distribution over instances. Such `quiet plantings' were further studied in \cite{jia2005generating,achlioptas2005hiding,krzakala2009hiding,krzakala2014reweighted}.
Algorithms for solving instances with various values of relative proportions for planted 3-SAT were given in \cite{flaxman2003spectral,krivelevich2006solving,coja2010efficient}. Following \cite{feldman2013complexity}, we define such problems using a {\em planting distribution} $Q$. This distribution is defined over $\{ \pm 1 \}^k$ and for a vector $z$ it gives the proportion of clauses in which the values  $\sigma$ assigns to the $k$-tuple of literals in the clause is $z$ (see Section \ref{sec:plantedCSPsec} for the formal definition).

A related class of problems is one in which for some fixed predicate $P$, an instance is generated by choosing a planted assignment $\sigma$ uniformly at random and generating a set of $m$ random and uniform $P$-constraints. That is, each constraint is of the form
$P(x_{i_1},\ldots,x_{i_k}) = P(\sigma_{i_1},\ldots,\sigma_{i_k})$, where
$(x_{i_1},\ldots,x_{i_k})$ is a randomly and uniformly chosen $k$-tuple of variables (without repetitions). The algorithmic problem is to determine $\sigma$ given the $m$ $k$-tuples of variables and the corresponding values of $P$ on those tuples. Goldreich \cite{goldreich2000candidate} proposed a one-way function based on the apparent hardness of these problems. In his proposal the predicate is chosen randomly. The hardness of such problems for other predicates, most notably noisy $k$-XOR-SAT, has been used in cryptographic applications including public key cryptosystems \cite{Alekhnovich11a,applebaum2010public}, and secure two-party computation \cite{ishai2008cryptography}. It has also been used to derive hardness of approximation~\cite{applebaum2012pseudorandom} (for public discussions of these problems/assumptions see \cite{Barak:2012:Online, Trevisan:2007:Online}). Problems of this type are usually referred to as Goldreich's pseudorandom generator (PRG).



Bogdanov and Qiao \cite{bogdanov2009security} show that an SDP-based algorithm of Charikar and Wirth \cite{CharikarWirth:04} can be used to find the planted assignment for any predicate that is {\em not} pairwise-independent using $m=O(n)$ constraints. The same approach can be used to recover the input for any $t$-wise independent predicate using $O(n^{(t+1)/2})$ evaluations via the folklore birthday ``paradox"-based reduction to $t=1$ (see \cite{o2013goldreich} for details).

Finding the planted assignment in a randomly generated $k$-SAT formula is at least as hard as distinguishing between a satisfiable formula generated using a planted assignment and a randomly and uniformly generated $k$-SAT formula. Even this seemingly easier problem appears to be hard for certain planting distributions. This problem is a special case of another well-studied hard problem: refuting the satisfiability of SAT formulas in which the goal is to distinguish a satisfiable formula from a randomly an uniformly generated one (see \cite{feldman2013complexity} for the details of the connection).

It is important to note that in planted $k$-CSP's the planted assignment becomes identifiable with high probability after at most $O(n \log n)$ random clauses yet the best known efficient algorithms require $n^{\Omega(r/2)}$ clauses. Problems exhibiting this type of behavior have attracted significant interest in learning theory \cite{Blum:92a,DecaturGR99,Servedio:00jcss,Feldman:07jmlr,Shalev-ShwartzST:12,BerthetR:13,DanielyLS:13} and some of the recent hardness results are based on the conjectured computational hardness of the $k$-SAT refutation problem \cite{DanielyLS:13,DanielyS14}.

The connection of planted CSP's to graph partitioning is that many algorithms for planted CSP's use graph partitioning, and spectral graph partitioning in particular, as a subroutine.  Examples of such algorithms for some classes of constraint distributions include Flaxman's algorithm for planted $3$-SAT \cite{flaxman2003spectral}, Krivelevich and Vilenchik's algorithm \cite{krivelevich2006solving} that runs in expected polynomial time, and the algorithm of Coja-Oghlan, Cooper, Frieze \cite{coja2010efficient} for planted $3$-SAT distributions that include the quiet plantings described above. Many of the same spectral techniques have been applied here as well for the SAT refutation problem \cite{goerdt2001efficient,goerdt2003recognizing,coja2004strong,feige2004easily,friedman2005recognizing,coja2004techniques}.

\subsubsection*{Comparison with previous work}

The algorithm of Mossel, Neeman, and Sly \cite{mossel2013proof} for the case $n_1 = n_2$ also runs in near linear time, while other known algorithmic approaches for planted partitioning that succeed near the optimal edge density \cite{mcsherry2001spectral,coja2010graph,massoulie2014community} perform eigenvector or singular vector computations and thus require superlinear time, though a careful randomized implementation of low-rank approximations can reduce the running time of McSherry's algorithm substantially \cite{achlioptas2001fast}.

For planted satisfiability, the algorithm of Flaxman for planted $3$-SAT works for a subset of planted distributions (those with distribution complexity at most $2$ in our definition below) using $O(n)$ constraints, while the algorithm of Coja-Oghlan, Cooper, and Frieze \cite{coja2010efficient} works for planted $3$-SAT distributions that exclude unsatisfied clauses and uses $O(n^{3/2} \ln^{10} n ) $ constraints.

The only previous algorithm that finds the planted assignment in Goldreich's PRG for all predicates is the SDP-based algorithm of Bogdanov and Qiao \cite{bogdanov2009security} with the folklore generalization to $r$-wise independent predicates ({\em cf.}~\cite{o2013goldreich}). Similar to our algorithm, it uses $\tilde{O}(n^{r/2})$ constraints. This algorithm effectively solves the noisy $r$-XOR-SAT instance and therefore can be also used to solve our general version of planted satisfiability using $\tilde{O}(n^{r/2})$ clauses (via the reduction in Section \ref{sec:CSPreduce}).
Notably for both this algorithm and ours, having a completely satisfying planted assignment plays no special role: the number of constraints required depends only on the distribution complexity.

To the best of our knowledge, our algorithm is the first for the planted $k$-SAT problem that runs in linear time in the number of constraints used.

Our algorithm is arguably simpler than the approach in \cite{bogdanov2009security} and substantially improves the running time even for small $k$. Another advantage of our approach is that it can be implemented using restricted access to the distribution of constraints referred to as statistical queries \cite{kearns1998efficient,feldman2013statistical}. Roughly speaking, for the planted SAT problem this access allows an algorithm to evaluate multi-valued functions of a single clause on randomly drawn clauses or to estimate expectations of such functions, {\em without direct access to the clauses themselves}. Recently, in \cite{feldman2013complexity}, lower bounds on the number of clauses necessary for a polynomial-time statistical algorithm to solve planted $k$-CSPs were proved. It is therefore important to understand the power of such algorithms for solving planted $k$-CSPs. A statistical implementation of our algorithm gives an upper bound that nearly matches the lower bound for the problem. See \cite{feldman2013complexity} for the formal details of the model and statistical implementation.

 Korada, Montanari, and Oh \cite{korada2011gossip}  analyzed the `Gossip PCA' algorithm, which for the special case of an equal bipartition is the same as our subsampled power iteration.  The assumptions, model, and motivation in the two papers are different and the results incomparable. In particular, while our focus and motivation are on general (nonsquare) matrices, their work considers extracting a planting of rank $k$ greater than $1$ in the square setting.  Their results also assume an initial vector with non-trivial correlation with the planted vector.  The nature of the guarantees is also different.

Two other algorithms are similar in spirit to our approach: clustering via matrix powering of Zhou and Woodruff \cite{zhou2004clustering} and `Power Iteration Clustering' of Lin and Cohen \cite{lin2010power}.  In each, partitioning is performed by multiplying an initial vector by the adjacency matrix of the random graph repeatedly.  These methods are similar to ours in their simplicity; the subsampling in our algorithm allows us to carry out a rigorous analysis through many more iterations.


\section{Model and results}
\label{sec:main}


\subsubsection*{Bipartite stochastic block model}
\label{sec:generalBlock}

\begin{defn}
\label{def:bsbm}
For $\del \in [0,2] \setminus \{1\}$, $n_1, n_2$ even, and $\mathcal P_1 = (A_1, B_1)$, $\mathcal P_2 = (A_2, B_2)$ bipartitions of vertex sets $V_1, V_2$ of size $n_1, n_2$ respectively, we define the bipartite stochastic block model $B(n_1, n_2, \mathcal P_1, \mathcal P_2, \del,p)$ to be the random graph in which edges between vertices in $A_1$ and $A_2$ and $B_1$ and $B_2$ are added independently with probability $\del p$ and edges between vertices in $A_1$ and $B_2$ and $B_1$ and $A_2$ with probability $(2- \del) p$.
\end{defn}

\begin{figure}[htb]
        \center{\includegraphics[width=300pt]
        {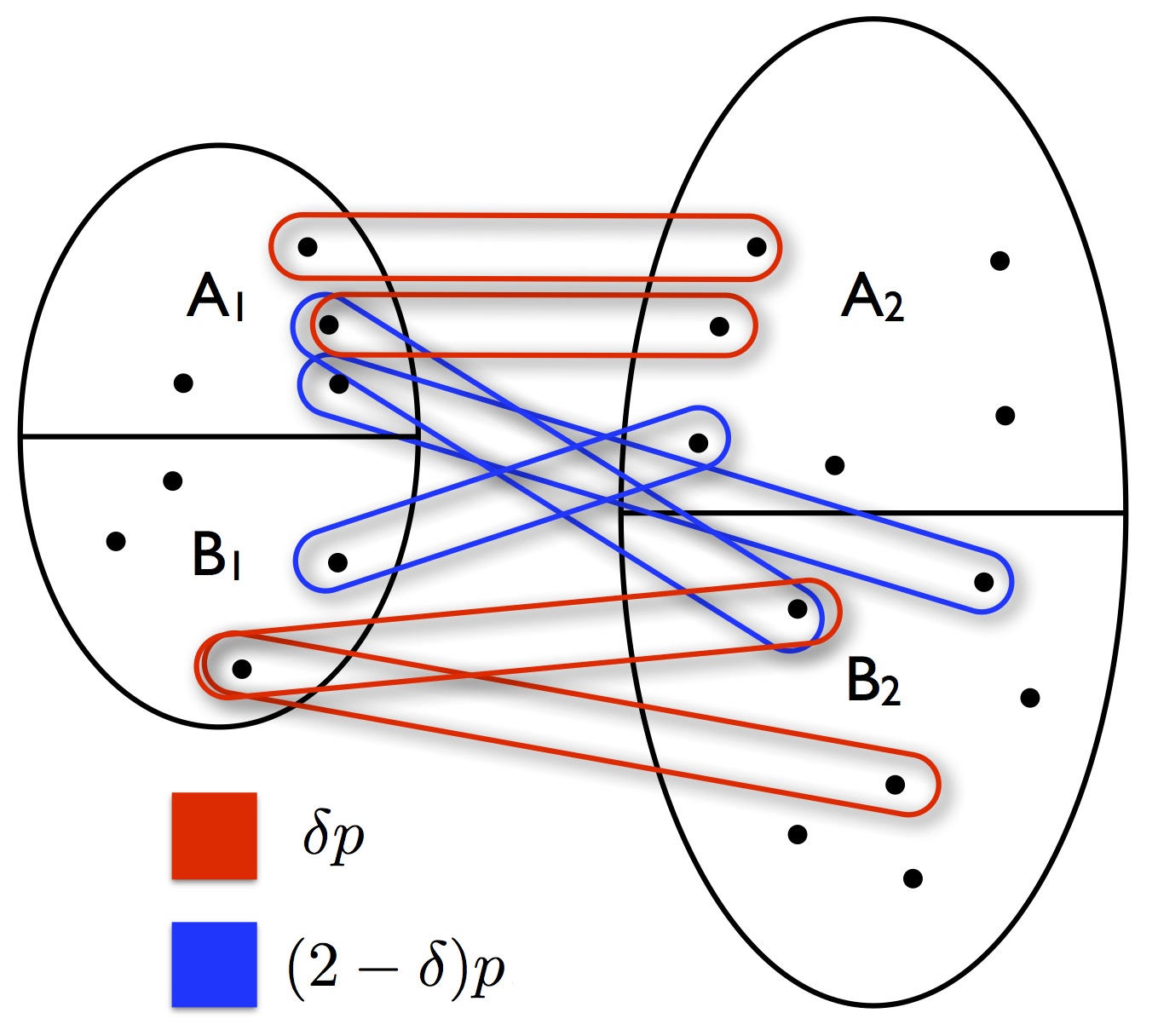}}
        \caption{Bipartite stochastic block model.}
      \end{figure}

Here $\del$ is a fixed constant while $p$ will tend to $0$ as $n_1, n_2 \to \infty$. Note that setting $n_1 = n_2 =n$, and identifying $A_1$ and $A_2$ and $B_1$ and $B_2$ gives the usual stochastic block model (with loops allowed); for edge probabilities $a/n$ and $b/n$, we have $\del = 2a/(a+b)$ and $p =(a+b)/2n $, the overall edge density. For our application to $k$-CSP's, it will be crucial to allow vertex sets of very different sizes, i.e. $n_2 \gg n_1$.

The algorithmic task for the bipartite block model is to recover one or both partitions (completely or partially) using as few edges and as little computational time as possible.  In this work we will assume that $n_1 \le n_2$, and we will be concerned with the algorithmic task of recovering the partition $\mathcal P_1$ completely, as this will allow us to solve the planted $k$-CSP problems described below.  We define complete recovery of $\mathcal P_1$ as finding the exact partition with high probability over the randomness in the graph and in the algorithm.  

\begin{thm}
\label{thm:block}
Assume $n_1 \le n_2$.  There is a constant $C$ so that the Subsampled Power Iteration algorithm described below completely recovers the partition $\mathcal P_1$ in the bipartite stochastic block model $B(n_1, n_2, \mathcal P_1, \mathcal P_2, \del,p)$ with probability $1-o(1)$ as $n_1 \to \infty$
when $p \ge \frac{C \log n_1}{(\delta-1)^2\sqrt{n_1 n_2}}$.
Its running time is $O\left(\sqrt{ n_1 n_2}\cdot \frac{\log n_1}{(\delta-1)^2}\right)$.
\end{thm}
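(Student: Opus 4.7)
Write the $n_1\times n_2$ bipartite adjacency matrix as
$A = p\,\mathbf{1}\mathbf{1}^\top + (\delta-1)p\,\sigma_1\sigma_2^\top + N$,
where $\sigma_1\in\{\pm 1\}^{n_1}$, $\sigma_2\in\{\pm 1\}^{n_2}$ encode $\mathcal{P}_1,\mathcal{P}_2$ and $N$ is mean-zero. The rank-one planted signal has top singular value $(\delta-1)p\sqrt{n_1n_2}$, while $\|N\|_{\mathrm{op}} = \Theta(\sqrt{pn_2})$ up to logs. At the hypothesized density $p\asymp \log n_1/((\delta-1)^2\sqrt{n_1n_2})$, whenever $n_2$ is a polynomial factor larger than $n_1$ the signal sits \emph{below} the noise spectral norm, so a single matrix--vector product or a global SVD of $A$ cannot recover $\sigma_1$. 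The idea of subsampled power iteration is to iterate using \emph{independent} copies of the noise so that the noise cannot conspire against the signal direction over many rounds.

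Concretely, split the edges of $A$ uniformly at random into $T=\Theta(\log n_1)$ disjoint batches $A^{(1)},\dots,A^{(T)}$, each distributed as $B(n_1,n_2,\mathcal{P}_1,\mathcal{P}_2,\delta,p/T)$ and independent of the others. Start from a uniformly random seed $x_0\in\{\pm 1\}^{n_1}$ on the smaller side (this gives initial correlation $\Theta(1/\sqrt{n_1})$ with $\sigma_1$, which is the reason the iteration depth depends on $\log n_1$ rather than $\log n_2$), and alternate $y_t = g((A^{(t)})^{\top} x_{t-1})$, $x_t = f(A^{(t+1)}y_t)$, where $f,g$ subtract the $\mathbf{1}$-component (estimating $p/T$ from the batch size) and threshold coordinatewise. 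Since $A^{(t)}$ is independent of the iterate it is applied to, one can condition on $y_{t-1}$ and write, for each $i\in V_1$,
\begin{equation*}
(A^{(t)}y_{t-1})_i \;=\; \tfrac{(\delta-1)p}{T}\sigma_1(i)\langle\sigma_2,y_{t-1}\rangle \;+\; \tfrac{p}{T}\langle\mathbf{1},y_{t-1}\rangle \;+\; \xi_i^{(t)},
\end{equation*}
where $\xi_i^{(t)}$ is a sum of $n_2$ independent centered Bernoullis scaled by the entries of $y_{t-1}$. After debiasing, the per-coordinate signal-to-noise ratio is $\Theta((\delta-1)\sqrt{pn_2/T}\cdot \rho_{t-1}^{(2)})$, where $\rho_{t-1}^{(2)}$ is the normalized correlation of $y_{t-1}$ with $\sigma_2$. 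Bernstein-type bounds coordinatewise, followed by a union bound, then yield $\rho_t^{(1)} \gtrsim (\delta-1)\sqrt{pn_2/T}\cdot\rho_{t-1}^{(2)}$, and symmetrically for the next half-step. One full round therefore multiplies correlation by $\gamma \asymp (\delta-1)^2 p\sqrt{n_1n_2}/T$, which exceeds a fixed constant $>1$ precisely under the hypothesis on $p$ with $T=\Theta(\log n_1)$. So $O(\log n_1)$ rounds drive the correlation from $\Theta(1/\sqrt{n_1})$ up to a constant.

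Once $\rho=\Omega(1)$, the per-coordinate signal is $\Omega(\sqrt{\log n_1})$ standard deviations above the noise, and a union bound over $V_1$ in a final cleanup step (using a freshly reserved batch) gives $\sgn(x_T)_i = \sigma_1(i)$ for every $i\in V_1$ with probability $1-o(1)$. Each edge participates in exactly one matrix--vector multiplication, so the total time is $O(\,\text{\# edges}\,) = O(\sqrt{n_1n_2}\log n_1/(\delta-1)^2)$. The main technical obstacle---hidden by my coarse $\pm 1$ description above---is that the iterates are really real-valued and the noise $\xi_i^{(t)}$ has variance scaling with $\|y_{t-1}\|_2^2$ while its Bernstein tail scales with $\|y_{t-1}\|_\infty$; neither norm is controlled by the correlation alone, and a naive induction on $\rho_t$ loses polynomial $\|y_t\|_2/\|y_t\|_\infty$ factors each step. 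The inductive invariant must therefore simultaneously pin down $\|y_t\|_2$, $\|y_t\|_\infty$, and $\rho_t^{(2)}$ (and analogously on the $x$ side), and it is this `multiple norms of an evolving solution' bookkeeping flagged in the abstract that lets the argument push through below the spectral norm barrier.
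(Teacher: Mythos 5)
Your high-level plan—split edges into $\Theta(\log n_1)$ batches, alternate matrix--vector products with fresh batches, track correlation and norms, run for $O(\log n_1)$ rounds---is the paper's approach, and you correctly flag the central bookkeeping difficulty (simultaneously controlling $\|y_t\|_2$, $\|y_t\|_\infty$, and the correlation). But three of your concrete steps do not go through, and the fixes are exactly where the paper's argument puts its weight.

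\textbf{1. Early rounds: the per-step growth is not high-probability.} You write as if one full round multiplies the correlation by a constant $\gamma>1$ with high probability, starting from $\rho_0 = \Theta(1/\sqrt{n_1})$. When $U_i := u\cdot x^i = \Theta(1)$ (equivalently $\rho = \Theta(1/\sqrt{n_1})$), the standard deviation of $u\cdot (M_i y^i)$ is of the same order as its mean, so ``$|U_{i+1}|\ge 2|U_i|$'' holds only with constant probability (about $1/2$), not whp. A naive union bound over $O(\log n_1)$ rounds would give failure probability $\Omega(1)$. The paper's Phase~1 replaces this with a run-of-successes argument: with probability $\ge 1/2$ each step either doubles $|U_i|$ or resets it above a fixed constant, and within $O(\log n_1)$ steps a run of $\log\log n_1$ consecutive doublings occurs whp, after which $|U_i|\ge \log n_1$ and Chebyshev gives whp doubling from then on. Your plan needs this bootstrapping; as written it does not reach the Chebyshev regime.

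\textbf{2. Coordinatewise Bernstein + union bound is the wrong tool for correlation growth, and it fails quantitatively.} You propose to prove $\rho_t^{(1)}\gtrsim (\delta-1)\sqrt{pn_2/T}\,\rho_{t-1}^{(2)}$ by per-coordinate Bernstein followed by a union bound. Two problems. First, per-coordinate control is much stronger than what is needed: the correlation $v\cdot y^{t}$ (and $u\cdot x^{t}$) is a single scalar sum, and the paper controls it directly with Chebyshev conditioned on the previous iterate; no union bound over coordinates is used for this purpose. Second, the Bernstein route is quantitatively blocked: the entry scale in the tail is $\|y^{t-1}\|_\infty$, and the best one can maintain is $\|y\|_\infty \le n_2^{-1/4}n_1^{-1/12}$ (Lemma~2 of the paper), which is polynomially larger than $n_2^{-1/2}$. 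With $q=p/T=\Theta(1/\sqrt{n_1n_2})$ and mean signal $\mu=(\delta-1)q\,V_t$, the linear Bernstein term $\|y\|_\infty\mu$ dominates the variance term $q$, and the resulting exponent is $\mu/\|y\|_\infty = O(\log n_1\cdot n_1^{-1/6})\to 0$. So the per-coordinate tail bound is vacuous at this density.

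\textbf{3. The single ``cleanup'' batch does not achieve exact recovery.} You claim that once $\rho=\Omega(1)$ the per-coordinate signal on $V_1$ is $\Omega(\sqrt{\log n_1})$ standard deviations. This is off for two reasons. (a) At this density the $y$-side correlation never becomes $\Omega(1)$: the invariant is only $V_i\ge (\delta-1)\sqrt{n_1n_2 q}/4$, and $V_i/\sqrt{n_2} = O(\sqrt{n_1 q}) = O((n_1/n_2)^{1/4}) \to 0$ when $n_2\gg n_1$. (b) With one batch of rate $q=p/T$, the signal-to-noise ratio for $(M y^i)_j$ is $(\delta-1)^2 q\sqrt{n_1n_2}/4=\Theta(1)$, a fixed constant, not growing with $n_1$. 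Even if you reserve a constant fraction of all edges (rate $\Theta(p)$) for the cleanup, the Gaussian heuristic would give $\Theta(\sqrt{\log n_1})$ standard deviations, but the same $\|y\|_\infty$ issue as in point 2 caps the usable Bernstein exponent far below $\log n_1$, so a union bound over $V_1$ does not close. The paper avoids this entirely: each of the last $\Theta(\log n_1)$ iterations gives an independent per-coordinate sign estimate that is correct with probability $\ge 0.9$ (Chebyshev suffices, no tail beyond a constant is needed), and a coordinatewise majority vote over these iterations, analyzed by Azuma/Chernoff across the \emph{rounds} rather than within a round, gives per-coordinate failure probability $o(n_1^{-1})$. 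This is not an optimization; it is how exact recovery is obtained.

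A smaller point: the $T$ subsampled graphs are disjoint subgraphs of a single Binomial graph, hence not independent (edge-disjointness is anti-correlation). The paper conditions on the accumulated edge set $E_{i-1}$ and a regularity event $H_{i-1}$ and shows the conditional entry probabilities are $\delta q + O(pq)$ etc., so the argument is ``nearly independent'' rather than independent. Your independence claim is a simplification that would need this patch.

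In short, your decomposition, the subsampling idea, and the identification of the multi-norm invariant are all correct and match the paper. The gaps are: the constant-probability bootstrapping in Phase~1, using Chebyshev on the inner product rather than per-coordinate Bernstein, and replacing the single cleanup by a majority vote over $\Theta(\log n_1)$ near-independent rounds.
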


Note that for the usual stochastic block model this gives an algorithm using $O(n \log n)$ edges and $O(n \log n )$ time, which is the best possible for complete recovery since that many edges are needed for every vertex to appear in at least edge.  With edge probabilities $a \log n/n$ and $b \log n/n$, our results requires $(a-b)^2 \ge C (a+b)$ for some absolute constant $C$, matching the dependence on $a$ and $b$ in \cite{boppana1987eigenvalues,mcsherry2001spectral} (see \cite{abbe2014exact} for a discussion of the best possible threshold for complete recovery).

 For any $n_1, n_2$, at least $\sqrt{n_1 n_2} $ edges are necessary for even non-trivial partial recovery, as below that threshold the graph consists only of small components (and even if a correct partition is found on each component, correlating the partitions of different components is impossible).  Similarly at least $\Omega( \sqrt{n_1 n_2} \log n_1)$ are needed for complete recover of $\mathcal P_1$ since below that density, there are vertices in $V_1$ joined only to vertices of degree $1$ in $V_2$.

For very lopsided graphs, with $n_2 \gg n_1 \log^2 n_1$, the running time is sublinear in the size of $V_2$; this requires careful implementation and is essential to achieving the running time bounds for planted CSP's described below.

\subsubsection*{Planted $k$-CSP's}
\label{sec:plantedCSPsec}
We now describe a general model for planted satisfiability problems introduced in \cite{feldman2013complexity}. For an integer $k$, let $\mathcal C_k$ be the set of all ordered $k$-tuples of literals from $x_1, \dots ,x_n, \overline x_1, \dots, \overline x_n$ with no repetition of variables. For a $k$-tuple of literals $C$ and an assignment $\sigma$,  $\sigma(C)$ denotes the vector of values that $\sigma$ assigns to the literals in $C$. A planting distribution $Q : \{\pm 1\}^k \to [0,1]$ is a probability distribution over $\{\pm 1\}^k$.
\begin{defn}
\label{def:csp}
Given a planting distribution $Q: \{\pm 1\}^k \to [0,1]$, and an assignment $\sigma \in \{ \pm 1 \}^n$, we define the random constraint satisfaction problem  $F_{Q,\sigma}(n,m)$ by drawing $m$ $k$-clauses from $\mathcal C_k$ independently according to the distribution
\[ Q_\sigma (C) = \frac{  Q(\sigma(C)) }{ \sum_{C^\prime \in \mathcal C_k} Q(\sigma(C^\prime))} \]
where $\sigma(C)$ is the vector of values that $\sigma$ assigns to the $k$-tuple of literals comprising $C$.
\end{defn}

\begin{figure}
        \center{\includegraphics[width=300pt]
        {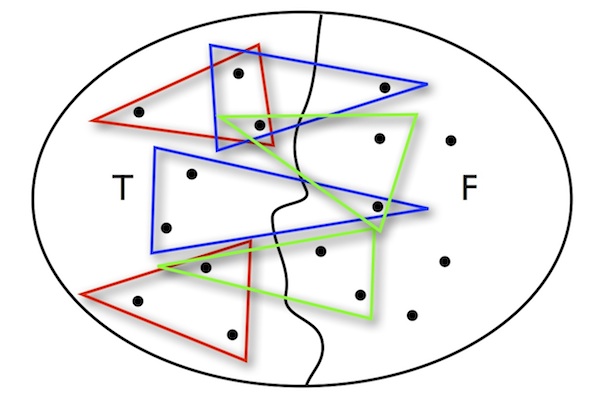}}
        \caption{Planted random 3-SAT.}
      \end{figure}

 \begin{defn}
 \label{def:distcomplex}
 The distribution complexity $r(Q)$ of the planting distribution $Q$ is the smallest integer $r \ge 1$ so that there is some $S \subseteq [k]$, $|S| =r$, so that the discrete Fourier coefficient $\hat Q(S)$ is non-zero.
 \end{defn}

In other words, the distribution complexity of $Q$ is $r$ if $Q$ is an $(r-1)$-wise independent distribution on $\{\pm 1\}^k$ but not an $r$-wise independent distribution. The uniform distribution over all clauses, $Q \equiv 2^{-k}$, has $\hat Q(S) =0$ for all $|S| \ge 1$, and so we define  its complexity to be $\infty$. The uniform distribution does not reveal any information about $\sigma$, and so inference is impossible. For any $Q$ that is not the uniform distribution over clauses, we have $1 \le r(Q) \le k$.

Note that the uniform distribution on $k$-SAT clauses with at least one satisfied literal under $\sigma$ has distribution complexity $r=1$. $r=1$ means that there is a bias towards either true or false literals.  In this case, a very simple algorithm is effective: for each variable, count the number of times it appears negated and not negated, and take the majority vote.  For distributions with complexity $r \ge 2$, the expected number of true and false literals in the random formula are equal and so this simple algorithm fails.

\begin{thm}
\label{thm:CSP}
For any planting distribution $Q$, there exists an algorithm that for any assignment $\sigma$, given an instance of $F_{Q,\sigma}(n,m)$ completely recovers the planted assignment $\sigma$ for $m=O(n^{r/2} \log n)$  using $O(n^{r/2} \log n)$ time, where $r \ge 2$ is the distribution complexity of $Q$.  For distribution complexity $r=1$, there is an algorithm that gives non-trivial partial recovery with $O(n^{1/2})$ constraints and complete recovery with $O( n \log n )$ constraints.
\end{thm}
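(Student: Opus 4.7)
My plan is to reduce the planted $k$-CSP problem to a bipartite stochastic block model instance and then apply Theorem~\ref{thm:block}. The base case $r = 1$ is classical: since $\hat Q(\{i\}) \neq 0$ for some $i \in [k]$, the expected imbalance between occurrences of $x_j$ and $\overline x_j$ in the $i$-th coordinate of a random clause is a constant multiple of $\sigma_j \hat Q(\{i\})$. Taking the per-variable majority vote over the $m$ signed indicators and applying Chernoff plus a union bound yields exact recovery for $m = \Theta(n \log n)$; a CLT-style argument gives nontrivial partial recovery already from $m = \Theta(\sqrt{n})$ clauses.

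For $r \geq 2$, fix $S \subseteq [k]$ with $|S| = r$ and $\hat Q(S) \neq 0$, and project each of the $m$ clauses onto its $r$ coordinates in $S$, producing $m$ i.i.d.\ $r$-tuples of literals with constant parity bias $|\hat Q(S)|$ under $\sigma$. I encode them as a bipartite graph: $V_1$ is the set of $2n$ literals, partitioned by $\sigma$ into $A_1$ (true) and $B_1$ (false); $V_2$ is the set of ordered $(r-1)$-tuples of distinct literals, of size $\Theta(n^{r-1})$, partitioned by the sign of $\prod_i \sigma(t_i)$. A projected $r$-tuple $(\ell, t)$ becomes the edge $\{\ell, t\}$. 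The parity of this edge under $\sigma$ is $+1$ exactly when its endpoints lie in matching classes, so the four inclusion probabilities match Definition~\ref{def:bsbm} with $|\delta - 1| = \Omega(|\hat Q(S)|)$ a positive constant and $p = \Theta(m / n^r)$. Since $\sqrt{n_1 n_2} = \Theta(n^{r/2})$, Theorem~\ref{thm:block} recovers $\mathcal P_1$ from $m = O(n^{r/2} \log n)$ clauses in $O(n^{r/2} \log n)$ time. The partition $\mathcal P_1$ determines $\sigma$ up to a global sign, which I resolve by the sign of the empirical $\hat Q(S)$ (for odd $r$) or by comparing the likelihoods of $\sigma$ and $-\sigma$ on a small held-out subsample (when $Q$ is itself negation-symmetric this last ambiguity is inherent to the model).

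The main technical obstacle is that the planted CSP draws a fixed number $m$ of i.i.d.\ clauses, whereas Definition~\ref{def:bsbm} demands edges that are mutually Bernoulli-independent. Poissonizing $m$ and then thresholding each $r$-tuple's multiplicity to $\{0,1\}$ produces exactly the Bernoulli model, at the cost of discarding only $O(\mathrm{polylog}\, n) = o(m)$ duplicate clauses. A minor additional issue is the mild bias arising from the no-repeated-variables constraint on $r$-tuples; this perturbs $p$ by only a $1 \pm o(1)$ factor, which is absorbed into the constant $C$ of Theorem~\ref{thm:block}.
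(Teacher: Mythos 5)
Your proposal is correct and follows essentially the same route as the paper: restrict each clause to the $r$ coordinates of a heavy Fourier set $S$, encode the resulting $r$-tuples as edges of a bipartite graph with $V_1$ the literals and $V_2$ the $(r-1)$-tuples (so $n_1=\Theta(n)$, $n_2=\Theta(n^{r-1})$, $|\delta-1|=\Theta(|\hat Q(S)|)$), Poissonize to convert i.i.d.\ sampling with replacement into independent Bernoulli edges, and invoke Theorem~\ref{thm:block}; the $r=1$ case is the same majority-vote argument. The only place you go slightly beyond the paper's write-up is in spelling out how to resolve the global sign once $\mathcal P_1$ is recovered (the paper leaves it at ``up to sign''), and your observation that the empirical $\hat Q(S)$ disambiguates for odd $r$ while the ambiguity is inherent when $Q$ is negation-symmetric is a small but legitimate tightening.
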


We also show that the same result applies to recovering the planted assignment in Goldreich's PRG defined above.
\begin{thm}
For any predicate $P: \{\pm 1\}^k \to \{\pm 1\}$, there exists an algorithm that for any assignment $\sigma$, given $m$ random $P$-constraints completely recovers the planted assignment $\sigma$ for $m=O( n^{r/2} \log n)$ and using  $O(n^{r/2} \log n)$ time, where $r \ge 2$ is the degree of the lowest-degree non-zero Fourier coefficient of $P$. For $r=1$, the algorithm gives non-trivial partial recovery with $O(n^{1/2})$ constraints and complete recovery with $O( n \log n )$ constraints.
\end{thm}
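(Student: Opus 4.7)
The plan is to reduce a Goldreich PRG instance with predicate $P$ to an instance that Theorem \ref{thm:CSP} (equivalently, Theorem \ref{thm:block}) can solve, so that the complexity parameter of the reduced instance equals $r$. I will describe the reduction for $r\ge 2$; the $r=1$ case follows from a direct weighted-majority argument analogous to the one for planted CSPs of complexity one, where $\sum_{i : x_j \in C_i} b_i$ has mean proportional to $\sigma_j\cdot \hat P(\{j\})$ and Chernoff delivers the claimed sample bounds.

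First, I partition the $m$ observed constraints $(C_i,b_i)$, with $b_i = P(\sigma(C_i))$, by the sign of $b_i$. Since $r$ is defined, $P$ is non-constant; both $P^{-1}(\pm 1)$ are nonempty and, by a Chernoff bound, at least one class, WLOG the $b_i=+1$ class, contains $m_* = \Omega(m) = \Omega(n^{r/2}\log n)$ constraints with probability $1-o(1)$. Conditioned on $b_i=+1$, the tuple $C_i$ is uniform over $\{C : P(\sigma(C))=+1\}$, which coincides exactly with the clause distribution of the planted CSP $F_{Q_+,\sigma}$ for
\[ Q_+(z) \;\propto\; \mathbf{1}[P(z)=+1] \;=\; \frac{1+P(z)}{2}. \]
Taking Fourier coefficients, a direct computation gives $\hat Q_+(S)\propto \hat P(S)$ for every nonempty $S\subseteq [k]$, so in the sense of Definition \ref{def:distcomplex} the distribution complexity of $Q_+$ equals the lowest Fourier degree of $P$, namely $r$. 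Applying Theorem \ref{thm:CSP} to $F_{Q_+,\sigma}$ on $m_*$ constraints then recovers $\sigma$ in time $O(n^{r/2}\log n)$.

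The main obstacle is a minor framework mismatch: Definition \ref{def:csp} samples clauses as $k$-tuples of \emph{literals}, whereas Goldreich's PRG uses $k$-tuples of \emph{variables}. I would bridge this by noting that the algorithm only interacts with the input through the noisy $r$-XOR-SAT reduction of Section \ref{sec:CSPreduce}, and that reduction uses the planting distribution solely through the Fourier mass $|\hat Q(S_0)|$ at a size-$r$ subset $S_0$ with $\hat Q(S_0)\neq 0$: this mass determines the bias of each extracted $r$-XOR clause. Identifying each Goldreich variable tuple with its all-positive-literal counterpart therefore yields an $r$-XOR-SAT instance with per-clause bias $\Theta(\hat P(S_0))$, to which Theorem \ref{thm:block} applies verbatim and produces both the claimed $O(n^{r/2}\log n)$ sample and runtime bounds.
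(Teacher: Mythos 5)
Your reduction for $r\ge 2$ is essentially the paper's own argument, made slightly more explicit: the paper notes that $P$ is correlated with the parity of the $r$ inputs indexed by some $S_0$ with $\hat P(S_0)\neq 0$, views the data as a noisy $r$-XOR predicate on uniform $r$-tuples of variables, keeps the tuples on which the observed value is $1$, and forms the bipartite block model on variables and $(r-1)$-tuples of variables; your computation that $Q_+\propto (1+P)/2$ has $\hat Q_+(S)\propto \hat P(S)$ for all nonempty $S$ (hence complexity exactly $r$) is a correct and welcome justification of that one-line claim. Two caveats. First, be careful with ``Theorem \ref{thm:block} applies verbatim'' after ``identifying each variable tuple with its all-positive-literal counterpart'': if you literally embed into the $2n$-literal universe of Section \ref{sec:CSPreduce}, every negated literal and every mixed tuple is an isolated vertex, so the resulting graph is \emph{not} distributed as $B(n_1,n_2,\mathcal P_1,\mathcal P_2,\delta,p)$ and Theorem \ref{thm:block} does not apply to it; the correct move (and the paper's) is to take $V_1$ to be the $n$ variables and $V_2$ the $(r-1)$-tuples of variables, partitioned by $\sigma$-value and by parity, after which $n_1=n$, $n_2=\Theta(n^{r-1})$ give the claimed $O(n^{r/2}\log n)$ sample and time bounds. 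Second, your $r=1$ sentence is incorrect as written: for a constraint with $x_j$ in position $l\in S_0$ one has $\E[b_i]\approx \hat P(\emptyset)+\hat P(\{l\})\,\sigma_j$ (and $\hat P(\{j\})$, with $j$ a variable index, is a type mismatch), so a plain majority of the $b_i$ is dominated by the known bias $\hat P(\emptyset)$ whenever $|\hat P(\emptyset)|>|\hat P(\{l\})|$, e.g.\ for nearly constant predicates. The fix is routine since $P$ is known: fix a position $l$ with $\hat P(\{l\})\neq 0$, compare the empirical mean of $b_i$ over constraints containing $x_j$ in position $l$ against $\hat P(\emptyset)$, and set $\sigma_j$ according to the sign of the deviation times the sign of $\hat P(\{l\})$; with that repair (and the variable/tuple block model above) your argument coincides with the paper's.
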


\section{The algorithm}
\label{sec:algorithms}

We now present our algorithm for the bipartite stochastic block model.  We define vectors $u$ and $v$ of dimension $n_1$ and $n_2$ respectively, indexed by $V_1$ and $V_2$, with $u_i = 1$ for $i \in A_1$, $u_i=-1$ for $i \in B_1$, and similarly for $v$.  To recover the partition $\mathcal P_1$ it suffices to find either $u$ or $-u$.  We will find this vector by multiplying a random initial vector $x_0$ by a sequence of centered adjacency matrices and their transposes.

We form these matrices as follows: let $G_p$ be the random bipartite graph drawn from the model $B(n_1, n_2, \mathcal P_1, \mathcal P_2, \del,p)$, and $T$ a positive integer. Then  form $T$ different bipartite graphs $G_1, \dots ,G_T$ on the same vertex sets $V_1,V_2$ by placing each edge from $G_p$ uniformly and independently at random into one of the $T$ graphs. The resulting graphs have the same marginal distribution.

Next we form the $n_1 \times n_2$ adjacency matrices $A_1, \dots, A_T$ for $G_1, \dots G_T$ with rows indexed by $V_1$ and columns by $V_2$ with a $1$ in entry $(i,j)$ if vertex $i \in V_1$ is joined to vertex $j\in V_2$.  Finally we center the matrices by defining $M_i = A_i - \frac{p}{T} J$ where $J$ is the $n_1 \times n_2$ all ones matrix.
\hide{
}

  In the bipartite block model, these subsampled matrices are nearly independent (see Lemma \ref{OddCalcProp}), leading to a strong bound on the number of iterations required to solve the problem.  The subsampling also mitigates the influence of high-degree vertices leading to significant improvement over the spectral approach for a large subclass of planted CSP's.

 The analysis of the algorithm proceeds by tracking a potential function, $U_i = x^i \cdot u$ for a sequence of unit vectors $x^0, x^1, \dots$ of dimension $n_1$.  We must bound various norms of the $x^i$'s as well as norms of a sequence of auxiliary vectors $y^1, y^2, \dots$ of dimension $n_2$. We use superscripts to denote the current step of the iteration and subscripts for the components of the vectors, so $x^i_j$ is the $j$th coordinate of the vector after the $i$th iteration.

 The basic iterative steps are the multiplications $y=M^T x $ and $x=M y$.



\begin{figure}[H]
\begin{center}
\fbox{
\parbox{\textwidth}{
{\bf Algorithm: Subsampled Power Iteration.}
\begin{enumerate}
\item Form $ T =10 \log n_1$ matrices $M_1, \dots ,M_T$ by uniformly and independently assigning each edge of the bipartite block model to a graph $G_1, \ldots ,G_T$, then forming the matrices $M_i = A_i - \frac{p}{T} J$, where $A_i$ is the adjacency matrix of $G_i$ and $J$ is the all ones matrix.
\item Sample $x\in \{ \pm 1\}^{n_1}$ uniformly at random and let $x^0 = \frac{x}{\sqrt {n_1}}$.
\item For $i=1$ to $T/2 $ let
\[
y^{i} = \frac{ M^T_{2i-1} x^{i-1} }{\| M^T_{2i-1} x^{i-1} \|}; \quad x^{i} = \frac{ M_{2i} y^{i} }{\| M_{2i} y^{i} \|}; \quad
 z^{i} = \text{sgn}(x^{i}).
 \]
\item For each coordinate $j\in [n_1]$ take the majority vote of the signs of $z_j^i$ for all $i \in \{T/4, \ldots,
  T/2\}$ and call this vector $\overline v$:
\[ \overline v_j = \text {sgn} \left( \sum_{i=T/2}^{T} z^i_j \right)  .\]
\item Return the partition indicated by $\overline v$.
\end{enumerate}
}
}
\end{center}
\end{figure}

The analysis of the resampled power iteration algorithm proceeds in four phases,  during which we track the progress of two vectors $x^i$ and $y^i$, as measured by their inner product with $ u$ and $v$ respectively.  We define $U_i :=  u \cdot x^i$ and $V_i := v \cdot y^i$. Here we give an overview of each phase; the complete analysis is in Section \ref{sec:analysis}.

\begin{itemize}
\item \textbf{Phase 1.} Within $\log n_1$ iterations,  $|U_i|$ reaches $\log n_1 $.  We show that conditioned on the value of $U_i$, there is at least a $1/2$ chance that $|U_{i+1}| \ge 2| U_i|$; that $U_i$ never gets too small; and that in $\log n_1$ steps, a run of $\log \log n_1$ doublings pushes the magnitude of $U_i$ above $ \log n_1$.
\item \textbf{Phase 2.} After reaching $ \log n_1$, $|U_i|$ makes steady, predictable progress, doubling at each step whp until it reaches $\Theta(\sqrt{n_1})$, at which point we say $x^i$ has strong correlation with $u$.
\item \textbf{Phase 3.} Once $x^i$ is strongly correlated with $u$, we show that $z^{i+1}$ agrees with either $u$ or $-u$ on a large fraction of coordinates.
\item \textbf{Phase 4.} We show that taking the majority vote of the coordinate-by-coordinate signs of $z^i$ over $O(\log n_1)$ additional iterations gives complete recovery whp.
\end{itemize}

\subsubsection*{Running time}


If $n_2 = \Theta(n_1)$, then a straightforward implementation of the algorithm runs in time linear in the number of edges used: each  entry of $x^{i} = M y^i$ (resp. $y^{i} = M^T x^{i-1}$) can be computed as a sum over the edges in the graph associated with $M$. The rounding and majority vote are both linear in $n_1$.

However, if $n_2 \gg n_1 $, then simply initializing the vector $y^i$ will take too much time.  In this case, we have to implement the algorithm more carefully.

Say we have a vector $x^{i-1}$ and want to compute $x^{i} = M_{2i} y^i $ without storing the   vector $y^{i}$. 
Instead of computing  $y^{i} = M_{2i-1}^T x^{i-1}$, we create a set $S^{i} \subset V_2$ of all vertices with degree at least $1$ in the current graph $G_{2i-1}$ corresponding to the matrix $M_{2i-1}$.  The size of $S^{i}$ is bounded by the number of edges in $G_{2i-1}$, and checking membership can be done in constant time with a data structure of size $O(|S^{i}|)$ that requires expected time $O(|S^{i}|)$ to create \cite{fredman1984storing}.

Recall that $M_{2i-1}= A_{2i-1} - q J$.  Then we can write
\[ y^i = ( A_{2i-1} - q J)^T x^{i-1}  = \hat y - q \left( \sum_{j=1}^{n_1} x_j^{i-1} \right ) \mathbf{1}_{n_2} = \hat y - q L \mathbf{1}_{n_2}, \]
 where $\hat y$ is $0$ on coordinates $j \notin S^i$, $L=\sum_{j=1}^{n_1} x_j^{i-1}$,   and $\mathbf 1_{n_2}$ is the all ones vector of length $n_2$.  

 Then to compute $x^{i} = M_{2i} y^i$, we write

 \begin{align*}
 x^i = (A_{2i} - qJ) y^i &= (A_{2i} - qJ)( \hat y - q L \mathbf{1}_{n_2})  \\
 &=(A_{2i} - qJ) \hat y  - q L A_{2i} \mathbf{1}_{n_2} +  q^2 L  J \mathbf{1}_{n_2} \\
 &=  A_{2i}  \hat y - qJ \hat y  - q L A_{2i} \mathbf{1}_{n_2} + q^2L n_2 \mathbf{1}_{n_1}
 \end{align*}

 We bound the running time of the computation as follows:  we can compute $\hat y$ in linear time in the number of edges of $G_{2i-1}$ using $S^i$.  Given $\hat y$, computing $A_{2i}  \hat y $ is linear in the number of edges of $G_{2i}$ and computing $qJ \hat y$ is linear in the number of non-zero entries of $\hat y$, which is bounded by the number of edges of $G_{2i-1}$.  Computing $L = \sum_{j=1}^{n_1} x_j^{i-1}$ is linear in $n_1$ and gives  $ q^2L n_2 \mathbf{1}_{n_1}$. Computing $q L A_{2i} \mathbf{1}_{n_2}$ is linear in the number of edges of $G_{2i}$. All together this gives our linear time implementation.

\section{Reduction of planted $k$-CSP's to the block model}
\label{sec:CSPreduce}

Here we describe how solving the bipartite block model suffices to solve the planted $k$-CSP problems.

Consider a planted $k$-SAT problem $F_{Q,\sigma}(n,m)$ with distribution complexity $r$.  Let $S\subseteq [k]$, $|S| =r$, be such that $\hat{Q}(S) = \eta \neq 0$. Such an $S$ exists from the definition of the distribution complexity.   We assume that we know both $r$ and this set $S$, as trying all possibilities (smallest first) requires only a constant factor ($2^r$) more time.

 We will restrict each $k$-clause in the formula to an $r$-clause, by taking the $r$ literals specified by the set $S$. If the distribution $Q$ is known to be symmetric with respect to the order of the $k$-literals in each clause, or if clauses are given as unordered sets of literals, then we can simply sample a random set of $r$ literals (without replacement) from each clause.

  We will show that restricting to these $r$ literals from each $k$-clause induces a distribution on $r$-clauses defined by $Q^\delta: \{\pm 1\}^r \to \R^+$ of the form
$Q^\delta(C) = \del/2^{r}$ for $|C|$ even, $Q^\delta(C) = (2-\del)/2^{r}$ for $|C|$ odd,
for some $\del \in [0,2]$ , $\del \ne 1$, where $|C|$ is the number of TRUE literals in $C$ under $\sigma$. This reduction allows us to focus on algorithms for the specific case of a parity-based distribution on $r$-clauses with distribution complexity $r$.

Recall that for a  function $f:\{-1,1\}^k \rightarrow \R$, its Fourier coefficients are defined for each subset $S \subset [k]$ as
\begin{equation*}
\hat f(S) = \E_{x \sim \{-1,1\}^k}[f(x)\chi_S(x)]
\end{equation*}
where  $\chi_S$ are the Walsh basis functions of $\{ \pm 1 \}^k$ with respect to the uniform probability measure, i.e.,  $\chi_S(x) = \prod_{i \in S} x_i $.

\begin{lem}
If the function $Q: \on^k \to \R^+$ defines a distribution $Q_{\sigma}$ on $k$-clauses with distribution complexity $r$ and planted assignment $\sigma$, then for some $S \subseteq [k]$, $|S|=r$ and $\del \in [0,2] \setminus \{1 \}$, choosing $r$ literals with indices in $S$ from a clause drawn randomly from $Q_{\sigma}$ yields a random $r$-clause from $Q^\delta_\sigma$.
\end{lem}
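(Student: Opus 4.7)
The plan is a direct Fourier calculation combined with a short counting argument. First I would observe that for $C$ drawn from $Q_\sigma$, the induced distribution on $\sigma(C) \in \{\pm 1\}^k$ is simply $Q$ itself, since the map $C \mapsto \sigma(C)$ is balanced on $\mathcal C_k$: for every $z \in \{\pm 1\}^k$, the number of ordered tuples $C \in \mathcal C_k$ with $\sigma(C) = z$ equals $|\mathcal C_k|/2^k = n(n-1)\cdots(n-k+1)$ (pick $k$ distinct variables, then each literal's sign is forced by $z$ and $\sigma$). Hence projecting $\sigma(C)$ to the coordinates in $S$ produces the marginal $Q|_S$ on $\{\pm 1\}^r$ defined by $Q|_S(y) = \sum_{x:\,x|_S = y} Q(x)$.

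Next I would expand $Q(x) = \sum_{T \subseteq [k]} \hat Q(T)\,\chi_T(x)$ and compute the marginal. Since $\chi_T(x) = \chi_{T \cap S}(y)\,\chi_{T \setminus S}(x|_{\bar S})$, summing over the free coordinates kills every term with $T \not\subseteq S$, leaving
\[
Q|_S(y) \;=\; 2^{k-r}\sum_{T \subseteq S} \hat Q(T)\,\chi_T(y).
\]
By the definition of distribution complexity, $\hat Q(T) = 0$ for every proper subset $T \subsetneq S$ (since $|T| < r$), so only the constant term $\hat Q(\emptyset) = 2^{-k}$ and the top term $\hat Q(S) = \eta \ne 0$ survive, and $Q|_S(y) = 2^{-r} + 2^{k-r}\eta\,\chi_S(y)$.

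Now $\chi_S(y) = \prod_{i \in S} y_i = (-1)^{r-|C|}$, where $|C|$ counts the TRUE literals among the positions in $S$, so $Q|_S$ is a two-valued function depending only on the parity of $|C|$. Setting $\del = 1 + 2^k \eta$ when $r$ is even (and $\del = 1 - 2^k \eta$ when $r$ is odd) yields exactly the claimed form $\del/2^r$ for $|C|$ even and $(2-\del)/2^r$ for $|C|$ odd. The inequality $Q|_S \geq 0$ forces $|2^k \eta| \leq 1$, hence $\del \in [0,2]$, and $\eta \ne 0$ gives $\del \ne 1$.

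Finally I would convert back to a statement about clauses. For any ordered $r$-tuple $C' \in \mathcal C_r$, the $k$-extensions $C \in \mathcal C_k$ with $\mathrm{proj}_S(C) = C'$ come from choosing $k-r$ literals on fresh variables; for any assignment $z''$ to the extra coordinates, the number of such extensions with $\sigma$-value $z''$ is the same (a product of choices of fresh variables with forced signs). Therefore $\Pr[\mathrm{proj}_S(C) = C']$ is proportional to $Q|_S(\sigma(C')) = Q^\del(\sigma(C'))$, which is precisely the definition of $Q^\del_\sigma(C')$. The main obstacle here is bookkeeping rather than conceptual depth: keeping the normalization factor $2^{k-r}$ correct and handling the two parity cases for $r$ consistently when identifying $\del$.
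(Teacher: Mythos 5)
Your proof is correct and follows the same Fourier-analytic route as the paper: use distribution complexity to kill all Fourier coefficients $\hat Q(T)$ with $1\le |T|<r$, leaving the marginal on $S$ a two-valued function of the parity $\chi_S$, then identify $\delta$ from $\hat Q(S)$. You are more explicit in three ways, all of which are improvements in rigor: you state up front that $\sigma(C)\sim Q$ when $C\sim Q_\sigma$ (the balanced-map counting), you write out the marginal $Q|_S(y)=2^{-r}+2^{k-r}\hat Q(S)\chi_S(y)$ rather than leaving it implicit, and you close the gap between a statement about the marginal of $\sigma(C)$ and the claim about the induced $r$-clause distribution via the uniform-extensions count. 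You also correctly flag a parity subtlety the paper glosses over: since $\chi_S(\sigma(C'))=(-1)^{r-|C'|}$ with $|C'|$ the number of TRUE literals, the paper's single formula $\delta=1+2^k\hat Q(S)$ is right only for $r$ even; for $r$ odd it should be $\delta=1-2^k\hat Q(S)$. This is cosmetic (swapping $\delta\leftrightarrow 2-\delta$ just relabels $A_2,B_2$ and the algorithm recovers $\sigma$ only up to sign anyway), but your version is the more careful one.
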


\begin{proof}
From Definition \ref{def:distcomplex} we have that there exists an $S$ with $|S| = r$ such that $\hat Q(S) \ne 0$. Note that by definition,
\begin{align*}
\hat Q(S) = \E_{x \sim \{ \pm 1 \}^k} [Q(x) \chi_S(x) ]
&= \frac{1}{2^k} \sum_{x \in \{ \pm 1 \}^k}  Q(x) \chi_S(x) \\
&= \frac{1}{2^k} \left(  \sum_{x: \in \{ \pm 1 \}^k : x_S \text{ even} } Q(x) -  \sum_{x: \in \{ \pm 1 \}^k : x_S \text{ odd} } Q(x)   \right ) \\
&= \frac{1}{2^k} \left ( \Pr[x_S   \text{ even}]  - \Pr [x_S   \text{ odd}] \right )
\end{align*}
where $x_S$ is $x$ restricted to the coordinates in $S$, and so if we take $\del =1+ 2^k \hat Q(S)$, the distribution induced by restricting $k$-clauses to the $r$-clauses specified by $S$ is $Q^\delta_\sigma$. Note that by the definition of the distribution complexity, $\hat Q(T) = 0$ for any $1 \le |T| < r$, and so the original and induced distributions are uniform over any set of $r-1$ coordinates.

\end{proof}

First consider the case $r=1$.  Restricting each clause to $S$ for $|S| =1$, induces a noisy 1-XOR-SAT distribution in which a random true literal appears with probability $\delta$ and random false literal appears with probability $2-\delta$. The simple majority vote algorithm described above suffices: set each variable to $+1$ if it appears more often positively than negated in the restricted clauses of the formula; to $-1$ if it appears more often negated; and choose randomly if it appears equally often.   
Using $c \sqrt{t \log(1/\eps)}$ clauses for $c=O(1/|1-\delta|^2)$ this algorithm will give an assignment that agrees with $\sigma $ (or $-\sigma$) on $n/2 + t \sqrt{n}$ variables with probability at least $1- \eps$; using $c n \log n$ clauses it will recover $\sigma$ exactly with probability $1-o(1)$.

Now assume that $r\ge 2$. We describe how the parity distribution $Q_\sigma^\delta$ on $r$-constraints induces a bipartite block model.  Let $V_1$ be the set of $2n$ literals of the given variable set, and $V_2 $  the collection of all $( r-1)$-tuples of literals.  We have $n_1 = |V_1| = 2n$ and $n_2 = |V_2| =  \binom{2n}{r-1}$. We partition each set into two parts as follows:  $A_1 \subset V_1$ is the set of false literals under $\sigma$, and $B_1$ the set of true literals.  $A_2 \subset V_2$ is the set of $(r-1)$-tuples with an even number of true literals under $\sigma$, and $B_2$ the set of $(r-1)$-tuples with an odd number of true literals.

For each $r$-constraint $(l_1, l_2, \dots ,l_r)$, we add an edge in the block model between the tuples $l_1 \in V_1$ and $ (l_{2} ,\ldots, l_r) \in V_2$.  A constraint drawn according to $Q^\delta_\sigma$ induces a random edge between $A_1$ and $A_2$ or $B_1$ and $B_2$ with probability $\del/2$ and between $A_1$ and  $B_2$ or $B_1$ and $A_2$ with probability $1-\del/2$, exactly the distribution of a single edge in the bipartite block model.

Now the model in Definition \ref{def:csp} is that of $m$ clauses selected independently with replacement according to a given distribution, while in Definition \ref{def:bsbm}, each edge is present independently with a given probability.  To reduce from $m$ independent edges with replacement to the binomial model, we can fix some $\eps >0$ (e.g. $\eps =1/2$), draw a Poisson random variable $Z$ with mean $(1- \eps)m$, and select the first $Z$ of the $m$ edges (whp $Z \le m$), discarding any multiple edges.  By Poisson thinning, this leaves us with a graph where each edge $e$ appears independently with probability $p_e$, where $p_e = \Pr[ \text{Poisson}((1-\eps)m \cdot q_e) \ge 1]$ where $q_e$ is the probability of edge $e$ in the single edge distribution.  In particular, if for example $e$ joins a vertex in $A_1$ to a vertex in $A_2$ and $\eps =1/2$, then $q_e = \del/2\cdot \frac{2}{n_1 n_2}$ and
\[ p_e = 1 - \exp \left (-  \frac{  \del m} {2 n_1 n_2}  \right )  = \del p \]
where $p = \frac{m}{2n_1 n_2} (1+o(1))$.

Recovering the partition $\mathcal P_1 = A_1 \cup B_1$ in this bipartite block model partitions the  literals into true and false sets giving $\sigma$ (up to  sign).

The reduction from Goldreich's PRG to the bipartite block model is even simpler. By definition, the value of the predicate is correlated with the parity function of some $r$ of the $k$ inputs of the predicate (see for example \cite{bogdanov2009security}). Therefore the input can be seen as produced by the noisy $r$-XOR predicate on random and uniform $r$-tuples of variables. The $r$-tuples for which this predicate is equal to 1 give an instance of noisy $r$-XOR-SAT. A bipartite block model can now be formed on the set of variables and $(r-1)$-tuples of variables (instead of literals) analogously to the construction above.

The key feature of our bipartite block model algorithm is that it uses $\tilde O(\sqrt{n_1 n_2})$ edges (i.e. $p = \tilde O((n_1n_2)^{-1/2})$, corresponding to  $\tilde O(n^{r/2})$ clauses in the planted CSP.

\section{Comparison with spectral approach}
\label{sec:spectral}

As noted above, many approaches to graph partitioning problems and planted satisfiability problems use eigenvectors or singular vectors. These algorithms are essentially based on the signs of the top eigenvector of the centered adjacency matrix being correlated with the planted vector. This is fairly straightforward to establish when the average degree of the random graph is large enough.  However, in the stochastic block model, for example, when the average degree is a constant, vertices of large degree dominate the spectrum and the straightforward spectral approach fails (see \cite{krzakala2013spectral} for a discussion and references).

In the case of the usual block model, $n_1 =n_2 =n$, while our approach has a fast running time, it does not save on the number of edges required as compared to the standard spectral approach: both require $\Omega(n \log n)$ edges.  However, when $n_2 \gg n_1$, eg. $n_1 = \Theta(n), n_2=\Theta(n^{k-1})$ as in the case of the planted $k$-CSP's for odd $k$, this is no longer the case.

Consider the general-purpose partitioning algorithm of  \cite{mcsherry2001spectral}.  Let $G$ be the matrix of edge probabilities: $G_{ij}$ is the probability that the edge between vertices $i$ and $j$ is present.  Let $G_u, G_v$ denote columns of $G$ corresponding to vertices $u, v$.  Let $\sigma^2$ be an upper bound of the variance of an entry in the adjacency matrix, $s_m$ the size of the smallest part in the planted partition, $q$ the number of parts, $\delta $ the failure probability of the algorithm, and $c$ a universal constant. Then the condition for the success of McSherry's partitioning algorithm is:

\[ \min_{u, v \text{ in different parts}} \| G_u - G_v \|^2 > c q \sigma^2 (n/s_m + \log (n/\del)) \]

Similar conditions appear in \cite{coja2010graph,vu2014simple}.  In our case, we have $q=4$, $n= n_1 + n_2$, $s_m = n_1/2$, $\sigma^2 = \Theta(p)$, and $\|G_u - G_v\|^2 = 4(\del -1)^2 p^2 n_2$.  When $n_2 \gg n_1 \log n$, the condition requires $p =\Omega(1/n_1)$, while our algorithm succeeds when $p = \Omega( \log n_1/ \sqrt{n_1 n_2})$. In our application to planted CSP's with odd $k$ and $n_1 = 2n, n_2 = \binom {2n}{k-1}$, this gives a polynomial factor improvement.

 In fact, previous spectral approaches to planted CSP's or random $k$-SAT refutation worked for even $k$  using  $n^{k/2}$ constraints \cite{goerdt2001efficient, coja2003certifying, feige2005spectral}, while algorithms for odd $k$ only worked for $k=3$ and used considerably more complicated constructions and techniques \cite{feige2004easily, friedman2005recognizing, coja2010efficient}. In contrast to previous approaches, our algorithm unifies the algorithm for planted $k$-CSP's for odd and even $k$, works for odd $k >3$, and is particularly simple and fast.

 We now describe why previous approaches faced a spectral barrier for odd $k$, and how our algorithm surmounts it.

  The previous spectral algorithms for even $k$ constructed a similar graph to the one in the reduction above: vertices are $k/2$-tuples of literals, and with edges between two tuples if their union appears as a $k$-clause.  The distribution induced in this case is the stochastic block model.  For odd $k$, such a reduction is not possible, and one might try a bipartite graph, with either the reduction described above, or with $\lfloor k/2 \rfloor$-tuples and $\lceil k/2 \rceil$-tuples (our analysis works for this reduction as well).  However, with $\tilde O (k/2)$ clauses, the spectral approach of computing the largest or second largest singular vector of the adjacency matrix does not work.

Consider $M$ from the distribution $M(p)$.  Let $u$ be the $n_1$ dimensional vector indexed as the rows of $M$ whose entries are $1$ if the corresponding vertex is in $A_1$  and $-1$ otherwise.  Define the $n_2$ dimensional vector $v$ analogously.  The next propositions summarize  properties of $M$.

\begin{prop}
$\E(M)=(\del -1)puv^T$.
\end{prop}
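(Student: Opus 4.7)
The plan is a direct entrywise verification. Since $M = A - pJ$ (taking the single-matrix version of the centering, with $p$ rather than $p/T$), and since the entries of $A$ are independent Bernoulli variables whose success probability is determined by the block to which each endpoint belongs, it suffices to check that $\E[M_{ij}] = (\delta-1) p \, u_i v_j$ for every pair $(i,j) \in V_1 \times V_2$. The vector $uv^T$ has entries $u_i v_j \in \{\pm 1\}$, taking the value $+1$ exactly when $i,j$ lie in matching parts ($A_1 \times A_2$ or $B_1 \times B_2$), and $-1$ otherwise.

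I would then split into the four cases dictated by Definition~\ref{def:bsbm}. When $(i,j) \in (A_1 \times A_2) \cup (B_1 \times B_2)$, the edge probability is $\delta p$, so $\E[A_{ij}] - p = (\delta - 1) p$, and this matches $(\delta - 1) p \cdot u_i v_j$ since $u_i v_j = +1$ in these cases. When $(i,j) \in (A_1 \times B_2) \cup (B_1 \times A_2)$, the edge probability is $(2 - \delta) p$, so $\E[A_{ij}] - p = (1 - \delta) p$, which equals $(\delta - 1) p \cdot u_i v_j$ because $u_i v_j = -1$ in these cases. The four computations agree with the claimed $(i,j)$-entry of $(\delta-1) p \, u v^T$.

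There is essentially no obstacle here: the statement is a bookkeeping fact encoding the very definition of the bipartite stochastic block model in matrix form (centering the adjacency matrix about its mean $p$ leaves a rank-one signal whose sign pattern is exactly the outer product of the indicator vectors $u$ and $v$). The only thing to be slightly careful about is the sign convention for $\delta$ (that $\delta \ne 1$ is what makes this signal nonzero, consistent with $\delta = 1$ being the uninformative Erd\H{o}s--R\'{e}nyi case), and the fact that the centering constant is the global edge density $p$ rather than the block-specific probabilities.
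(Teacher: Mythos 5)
Your proof is correct and is exactly the immediate entrywise computation the paper has in mind (the paper states this proposition without proof, since centering the adjacency matrix by $p$ and checking the four block cases is all there is to it). Your case analysis matches Definition~\ref{def:bsbm} and the sign convention for $u_iv_j$, so nothing is missing.
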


\begin{prop}
\label{SVDprop}
Let $M_1$ be the rank-$1$ approximation of $M$ drawn from $ M(p)$.  Then
$\|M_1 - \E(M)\| \le 2\|M-\E(M)\|.$
\end{prop}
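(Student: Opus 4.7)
The plan is to use the triangle inequality together with the fact that the rank-$1$ truncated SVD is, by the Eckart–Young–Mirsky theorem, the best rank-$1$ approximation to $M$ in the spectral norm.

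First I would write, by the triangle inequality,
\[
\|M_1 - \E(M)\| \le \|M_1 - M\| + \|M - \E(M)\|.
\]
Then I would invoke the optimality of the truncated SVD: since $M_1$ is the best rank-$1$ approximation of $M$ with respect to the spectral norm, and since $\E(M) = (\delta-1)p\, uv^T$ is itself a rank-$1$ matrix, we have
\[
\|M - M_1\| \;=\; \min_{\operatorname{rank}(N) \le 1} \|M - N\| \;\le\; \|M - \E(M)\|.
\]
Substituting this bound into the triangle inequality gives $\|M_1 - \E(M)\| \le 2\|M - \E(M)\|$, as claimed.

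There is really no obstacle here beyond making sure that the phrase ``rank-$1$ approximation of $M$'' in the statement refers to the spectral-norm best rank-$1$ approximation (i.e.\ $M_1 = \sigma_1 u_1 v_1^T$ from the SVD of $M$); this is the standard convention and is needed for Eckart–Young to apply. The result is therefore essentially immediate and will be used later to argue that the top singular vectors of $M$ are close to the planted directions $u,v$ whenever the noise term $\|M - \E(M)\|$ is controlled.
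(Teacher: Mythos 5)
Your proof is correct and is exactly the paper's argument: triangle inequality followed by the optimality of the rank-$1$ truncated SVD (Eckart--Young) applied against the rank-$1$ matrix $\E(M)$. The only difference is that you spell out the optimality step and the rank-$1$ nature of $\E(M)$ explicitly, which the paper leaves implicit.
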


\begin{proof}
Using the triangle inequality and then the optimality of $M_1$, $\|M_1 - \E(M)\| \le \|M - \E(M)\| + \|M-M_1\| \le 2\|M-\E(M)\|$.
\end{proof}

The above propositions suffice to show high correlation between the top singular vector and the vector $u$ when $n_2 = \Theta( n_1) $ and $p = \Omega(\log n_1 / n_1)$. This is because the norm of $\E(M)$ is $p\sqrt{n_1n_2}$; this  is higher than $O(\sqrt{pn_2})$, the norm of $M-\E(M)$ for this range of $p$. Therefore the top singular vector of $M$ will be correlated with the top singular vector of $\E(M)$. The latter is a rank-$1$ matrix  with $u$ as its left singular vector.

However, when $n_2 \gg n_1$ (eg. $k$ odd) and $p = \tilde O((n_1 n_2)^{-1/2})$, the norm of the zero-mean matrix $M-\E(M)$ is in fact much larger than the norm of $\E(M)$.  Letting $x^{(i)}$ be the vector of length $n_1$ with a $1$ in the $i$th coordinate and zeroes elsewhere, we see that $\|Mx^{(i)}\|_2 \approx \sqrt{p n_2}$, and so
 $\| M- \E(M)\| = \Omega(\sqrt{p n_2})$, while $\| \E(M) \| = O(p\sqrt{n_1n_2})$; the former is $\Omega((n_2/n_1)^{1/4})$ while the latter is $O(1)$).
In other words, the top singular value of $M$ is much larger than the value obtained by the vector corresponding to the planted assignment! The picture is in fact richer: the straightforward spectral approach succeeds for $p \gg n_1^{-2/3} n_2^{-1/3}$, while  for $p \ll n_1^{-2/3}n_2^{-1/3}$, the top left singular vector of the centered adjacency matrix is asymptotically uncorrelated with the planted vector \cite{FlorescuP:15manu}. In spite of this, one can exploit correlations to recover the planted vector below this threshold with our resampling algorithm, which in this case provably outperforms the spectral algorithm.

\section{Analysis of the subsampled power iteration algorithm}
\label{sec:analysis}

We abuse notation and let $A_1, B_1, A_2, B_2$ denote the sets of coordinates of the corresponding vertex sets. Recall that $u \in \{ \pm 1 \}^{n_1}$ is $1$ on $A_1$ and $-1$ on $B_1$, and  $v \in \{ \pm 1 \}^{n_2}$ is $1$ on $A_2$, $-1$ on $B_2$. Set $T= 10 \log n_2$, $p = \frac{100 T}{(\del-1)^2 \sqrt{n_1 n_2}}$ and $q = p/T$. For convenience we denote $d = 100/(\del-1)^2$. We assume WLOG that $\del >1$.

 Recall that the sequence of matrices $M_1, \dots M_T$ is formed by taking $G_p$ and randomly assigning each edge to one of $T$ different bipartite graphs, then forming the corresponding centered adjacency matrices.   The marginal distribution of each $M_i$ is a random $n_1 \times n_2$ matrix with independent entries such that the entry $(i,j)$ takes value $1-q$ with probability $\del q$, $-q$ otherwise if $i \in A_1, j \in A_2$ or $i \in B_1, j \in B_2$, and value $1-q$ with probability $(2-\del) q$, $-q$ otherwise if $i \in A_1, j \in B_2$ or $i \in B_1, j \in A_2$.
 
The matrices are not independent, but are nearly independent.  Consider the distribution of $M_i$ conditioned on the matrices $M_1, \dots M_{i-1}$, call this set of edges. Let $E_{i-1}$ be the set of all edges from $G_p$ that are assigned to one of $M_1, M_2, \dots M_{i-1}$.  Conditioned on $M_1, \dots M_{i-1}$, the entries of $M_i$ are independent.  $M_i$ is necessarily $-q$ in every entry $(u,v)$ with $(u,v) \in E_{i-1}$.   All other entries take the values $1-q$ with probabilities 
\begin{align*}
\rho_i &:= \frac{ \Pr[ (u,v) \in G_i \wedge (u,v) \notin G_1, \dots G_{i-1}]}{\Pr[  (u,v) \notin G_1, \dots G_{i-1}]} \\
&=\frac{\del q}{1- \del p  \frac{i-1}{T} } = \del q + O( pq)  \quad \mbox{ if } u \in A_1, v\in A_2 \mbox{ or } u \in B_1, v \in B_2\\
& \text{ and }\\
\overline \rho_i &:= \frac{ \Pr[ (u,v) \in G_i \wedge (u,v) \notin G_1, \dots G_{i-1}]}{\Pr[  (u,v) \notin G_1, \dots G_{i-1}]} \\
&= \frac{ (2-\del)q }{1 - (2-\del) p  \frac{i-1}{T} } = (2-\del) q+ O( pq)  \quad \mbox{ if } u \in A_1, v\in B_2 \mbox{ or } u \in A_2, v \in B_1
\end{align*}
and the value $-q$ otherwise.  The deviation from the fully independent setting is the $O(pq)$ term.
 
Let $H_{i-1}$ be the event that (1) $| E_{i-1}| \le 2  n_1 n_2 p$, (2) each vertex of $V_1$ appears in $E_{i-1}$ at most $3 n_2 p$ times, and (3) each vertex of $V_2$ appears in $E_{i-1}$ at most $3 T (\del-1)^{-2}  $ times.  $H_{i-1}$ holds for all $1 \le i \le T$ whp from  simple Chernoff bounds.  We will condition on the set $E_{i-1}$ and the event $H_{i-1}$, to calculate the effect of multiplying a unit vector by $M_i$ or $M_i^T$. The calculations are based on bounding two deviations from the simpler calculations involving the marginal distribution of $M_i$: the deviations from the probabilities $\rho_i$ and $\overline \rho_i$ differing from $\del q$ and $(2-\del)q$, and the deviations from the entries that are fixed to $-q$.  We write $g(n) = f(n) + O(h(n))$ to denote two-sided error, i.e. $f(n) - C h(n) \le g(n) \le f(n) + C h(n)$.   


\begin{lem}
\label{OddCalcProp}
Let $ x$ and $y$ be unit vectors of dimension $n_1$ and $n_2$ respectively. Then
\begin{enumerate}
\item $\E[  u\cdot (M_i y) | E_{i-1}, H_{i-1}] = (\del -1) n_1 q ( v \cdot y)(1+o(1)) + O( n_1 n_2 p^2 \| y\|_\infty ) $ 
\item $\var[u \cdot (M_iy)| E_{i-1}, H_{i-1} ] = n_1 q(1+o(1))  + O(n_1 n_2 p^2 \| y \|_\infty^2 ) $.
\item $\E [  v \cdot (M_i^T x)| E_{i-1}, H_{i-1}] = (\del -1) n_2 q ( u \cdot x)(1+o(1))+ O(n_2 p \|x \| _{\infty})$ 
\item  $\var[ v \cdot (M_i^T x) | E_{i-1}, H_{i-1}] =n_2 q (1+o(1)) +O(n_2 p^2 \|x \|_{\infty}^2)$.
\item $\E [ \|M_i y\|_2^2 |E_{i-1}, H_{i-1}] = n_1 q (1+o(1)) + (\del -1)^2 n_1 q^2 (  v \cdot  y)^2(1+o(1))  +O( n_1 n_2^2 p^4 \|y \|_{\infty}^2 )    $.
\item $\E [ \|M_i^T x\|_2^2 | E_{i-1}, H_{i-1}] = n_2 q (1+o(1)) +  (\del -1)^2 n_2 q^2 ( u \cdot  x)^2(1+o(1))  +O(n_2p^2 \|x \|_\infty^2)      $.
\item $\var[ \|M_i y\|_2^2 | E_{i-1}, H_{i-1}] = O( n_1q(   \| y\|_1 \cdot  \|y \|_\infty^3 + q^3 \|y \|_1^4 + q \|y\|_\infty^2  \cdot \|y\|_1^2 + q + q^2 \|y\|_1^2) ) $.
\item $\var [ \|M_i^Tx \|_2^2 | E_{i-1}, H_{i-1}] = O(n_2n_1 q \|x\|_\infty^4)  $.
\end{enumerate}
\end{lem}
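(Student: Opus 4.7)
The plan is to exploit the fact that, conditionally on $E_{i-1}$ and $H_{i-1}$, the entries of $M_i$ are mutually independent: each non-fixed entry is Bernoulli with success probability $\rho_i = \delta q + O(pq)$ or $\overline{\rho}_i = (2-\delta)q + O(pq)$ depending on the partition the endpoints lie in, while the $|E_{i-1}| \le 2 n_1 n_2 p$ entries on previously-seen edges are frozen at $-q$. I will compute each quantity in the idealized, fully-independent model with the clean marginals $\delta q$, $(2-\delta) q$, $-q$, and then absorb the $O(pq)$ probability perturbations and the frozen-entry corrections into the error terms of the lemma. Since $q = p/T$, each such correction scales as $O(n_1 n_2 pq) = O(n_1 n_2 p^2/T)$ times the appropriate norm factor, which is majorized by the error terms in items 1--6.

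For the expectation items (1, 3, 5, 6) this is routine linearity. In the ideal model, $\E[M_i] = (\delta - 1) q\, u v^T$, so $\E[u \cdot M_i y] = (\delta-1) q (u^T u)(v \cdot y) = (\delta-1) n_1 q (v\cdot y)$, proving item~1; item~3 is symmetric. For items 5 and 6 I expand $\|M_i y\|^2 = y^T M_i^T M_i y$ and decompose $\E[M_i^T M_i]$ into its rank-one ``mean-mean'' piece $\E[M_i]^T \E[M_i] = (\delta-1)^2 q^2 n_1\, vv^T$ and its diagonal ``variance'' piece with each entry $n_1 q(1+O(q))$; contracting with $y \otimes y$ yields the two leading terms of item~5. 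Item~6 is symmetric. The linear-form variances in items 2 and 4 follow directly from per-entry independence: $\var(u \cdot M_i y) = \sum_{j,k} u_j^2 y_k^2 \var((M_i)_{jk}) = n_1 q(1+o(1))$, with the stated error again coming from the forced and perturbed entries.

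The main obstacle is item 7 (with item 8 an easier column-wise analogue). The key structural observation is that the rows of $M_i$ remain independent conditional on the past, so writing $Z_j := \sum_k (M_i)_{jk} y_k$ gives
\begin{equation*}
\var(\|M_i y\|^2) = \sum_{j \in V_1} \var(Z_j^2).
\end{equation*}
Decomposing each row as $Z_j = \mu_j + \tilde Z_j$, where $\mu_j = \E[Z_j]$ satisfies $|\mu_j| = O(q \|y\|_1)$ and $\tilde Z_j$ is a centered sum of bounded independent variables with per-coordinate second moment $O(q)$, I use the identity
\begin{equation*}
\var(Z_j^2) = \E[\tilde Z_j^4] - (\E[\tilde Z_j^2])^2 + 4 \mu_j\, \E[\tilde Z_j^3] + 4 \mu_j^2\, \E[\tilde Z_j^2]
\end{equation*}
together with the standard moment bounds $\E[\tilde Z_j^2] = O(q\|y\|^2)$, $\E[\tilde Z_j^3] = O(q \|y\|_\infty^2 \|y\|_1)$, and $\E[\tilde Z_j^4] = O\bigl(q \|y\|_\infty^3 \|y\|_1 + q^2 \|y\|_\infty^2 \|y\|_1^2\bigr)$, converting $\ell_2$ factors to $\ell_1 \cdot \ell_\infty$ via $\sum_k y_k^r \le \|y\|_\infty^{r-1}\|y\|_1$ where needed. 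Summing over the $n_1$ rows produces the five summands of item~7; item~8 follows from the same argument applied column-wise, with $\|x\|_\infty$ replacing $\|y\|_\infty$ and the unit-norm constraint $\|x\|^2 = 1$ collapsing the intermediate norms.

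The hardest part of the bookkeeping will be matching each of the five summands in item~7 cleanly: the $\|y\|_1 \|y\|_\infty^3$ term comes from the dominant diagonal mass in $\E[\tilde Z_j^4]$, the $q \|y\|_\infty^2 \|y\|_1^2$ term from its off-diagonal mass, the $q^3 \|y\|_1^4$ and $q^2 \|y\|_1^2$ terms from the $\mu_j$-corrections, and the isolated ``$q$'' term absorbs the $O(pq)$ conditioning slack. Keeping these dimensionally correct — without dropping factors of $(\delta-1)$ or losing the $1/T$ gained from subsampling — will require careful accounting, but no ideas beyond the elementary moment inequalities above.
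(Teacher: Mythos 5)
Your plan is essentially the paper's proof. Both compute conditionally on $E_{i-1},H_{i-1}$ using the per-entry independence of $M_i$, evaluate per-coordinate means and variances of $(M_iy)_j$ and $(M_i^Tx)_j$ with the perturbed probabilities $\rho_i,\overline\rho_i=\delta q+O(pq),(2-\delta)q+O(pq)$, sum over coordinates, and for items 7--8 exploit that the rows (resp.\ columns) of $M_i$ are conditionally independent so that $\var\|M_iy\|_2^2=\sum_j\var\bigl((M_iy)_j^2\bigr)$. The moment bounds $\sum_k|y_k|^r\le\|y\|_\infty^{r-1}\|y\|_1$ are exactly what the paper uses (combined with $\|y\|_2=1$) to reach the stated five-term bound.

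The one place you depart from the paper is the variance identity. The paper simply uses the crude bound $\var\bigl((M_iy)_j^2\bigr)\le\E\bigl[(M_iy)_j^4\bigr]$ and expands the uncentered fourth moment; you instead split $Z_j=\mu_j+\tilde Z_j$ and use the exact expression $\var(Z_j^2)=\E[\tilde Z_j^4]-(\E[\tilde Z_j^2])^2+4\mu_j\E[\tilde Z_j^3]+4\mu_j^2\E[\tilde Z_j^2]$, which drops the $\mu_j^4$ term. This is a strictly tighter route, but it changes the bookkeeping: the paper's $q^3\|y\|_1^4$ summand in item~7 is precisely the $\mu_j^4=O(q^4\|y\|_1^4)$ contribution (after factoring $q$) that your identity eliminates, so your decomposition would not reproduce that summand (it would simply land inside the stated $O(\cdot)$). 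Also, your attribution of the isolated ``$q$'' term to conditioning slack is off: in the paper it arises from the off-diagonal fourth-moment contribution $O(q^2\|y\|_2^4)=O(q^2)$ using $\|y\|_2=1$ (in fact that term is dominated by $q^2\|y\|_\infty^2\|y\|_1^2$, so it is redundant), while the $O(pq)$ perturbations and the frozen entries are absorbed in the $(1+o(1))$ factors and in the explicit $\|y\|_\infty$- and $\|x\|_\infty$-weighted error terms of items 1--6. These are bookkeeping slips, not gaps; the method and the final bounds are correct and match the paper's.
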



\begin{proof}
   If $j \in A_1$,
\begin{align*}
\E [(M_i y)_j |E_{i-1}, H_{i-1} ] &= - \sum_{l=1}^{n_2} q y_l  + \sum_{\substack {l \in A_2\\ (j,l) \notin E_{i-1} }} \rho_i y_l    + \sum_{ \substack{l \in B_2 \\(j,l) \notin E_{i-1} } } \overline \rho_i y_l \\
&= (\del -1) q ( v \cdot y)(1+O(p)) + O(n_2 p q \| y \|_{\infty}   ) \\
\var(My)_j &=\sum_{\substack {l \in A_2\\ (j,l) \notin E_{i-1} }} \rho_i (1 -\rho_i) y_l^2    + \sum_{ \substack{l \in B_2 \\(j,l) \notin E_{i-1} } } \overline \rho_i (1 -\overline \rho_i)  y_l^2
\end{align*}
and similarly for $j \in B_1$.

 This gives
 \begin{align*}
\E [ u \cdot (M_iy) |E_{i-1}, H_{i-1}] &= (\del -1 ) n_1 q  (v \cdot y)(1+O(p))  + O( n_1 n_2 p^2 \| y\|_\infty )\\
\var[ u \cdot (M_iy) |E_{i-1}, H_{i-1} ] &= n_1 q \| y \|_2^2  +O(n_1 q^2 \| y \|_2^2 + n_1 n_2 p^2 \| y \|_\infty^2 ) \\
&= n_1 q \| y \|_2^2 (1+O(p))  + O(n_1 n_2 p^2 \| y \|_\infty^2 ) 
\end{align*}

Then if $j \in A_2$,
\begin{align*}
\E [ (M_i^Tx)_j |E_{i-1}, H_{i-1}] &= - \sum_{l=1}^{n_1} q x_l  + \sum_{\substack{l \in A_1  \\ (l,j) \notin E_{i-1}}} \rho_i x_l    + \sum_{\substack{l \in B_1 \\ (l,j) \notin E_{i-1}  } } \overline \rho_i x_l \\
&= (\del -1) q (u \cdot x)(1+O(p)) + O( p \|x\|_{\infty} )  \\
\var [(M_i^Tx)_j |E_{i-1}, H_{i-1}] &=\sum_{\substack{l \in A_1  \\ (l,j) \notin E_{i-1}}} \rho_i (1 -\rho_i) x_l^2    + \sum_{\substack{l \in B_1 \\ (l,j) \notin E_{i-1}  } } \overline \rho_i (1 -\overline \rho_i)  x_l^2
\end{align*}
and similarly for $j \in B_2$.

   This gives
 \begin{align*}
\E [ v \cdot (M_i^Tx) |E_{i-1}, H_{i-1}]&= (\del -1 ) n_2 q  (u \cdot x)(1+O(p)) +O(n_2 p \|x \| _{\infty})  \\
\var[  v \cdot (M_i^Tx) |E_{i-1}, H_{i-1}] &= n_2 q  \| x \|_2^2 (1+ O(p)) + O(n_2 p^2 \|x \|_{\infty}^2)
\end{align*}

Finally we have

\begin{align*}
\E [ \|M_i y \|_2^2 |E_{i-1}, H_{i-1} ] &=  n_1 q \| y \|_2^2(1+ O(p)) +  (\del -1)^2 q^2 n_1 ( v \cdot y)^2 (1+O(p)) + O( n_1 n_2^2 p^4 \|y \|_{\infty}^2 )
\end{align*}
and
\begin{align*}
\var [ \|M_i y \|_2^2 |E_{i-1}, H_{i-1}] &= \sum_{i=1}^{n_1} \var ( (M y)_i )^2 ) \\
&\le \sum_{i=1}^{n_1}  \E  (  (M y)_i )^4  ) \\
&= O(  n_1 ( \|y\|_\infty^3 \|y\|_1 q + q^4 \|y\|_1^4 + q^2 \|y\|_2^4 + q^3 \|y\|_1^2 \|y\|_2^2 + q^2 \|y\|_\infty^2 \|y\|_1^2   )) \\
&= O( n_1 q (   \|y\|_\infty^3 \|y\|_1+q^3 \|y\|_1^4 + q  + q^2 \|y\|_1^2  + q \|y\|_\infty^2 \|y\|_1^2    ))
\end{align*}
and
  \begin{align*}
\E [ \|M_i^T x \|_2^2 |E_{i-1}, H_{i-1} ] &=  n_2 q  \| x \|_2^2(1+ O(p))
 +  (\del -1)^2 q^2 n_2 ( u\cdot x)^2(1+O(p)) + O(n_2p^2 \|x \|_\infty^2)
\end{align*}
and
\begin{align*}
\var [ \|M_i^T x \|_2^2 |E_{i-1}, H_{i-1}] &= \sum_{i=1}^{n_2} \var ( (M^T x)_i )^2 ) \\
&\le \sum_{i=1}^{n_2}  \E  (  (M^T x)_i )^4  ) \\
&=  n_2 \|x \|_\infty^4 \cdot O \left ( n_1 q  +n_1^2 q^2 + n_1^3 q ^3 + n_1 ^4 q^4    \right ) \\
&= O \left(   n_2 n_1 q  \|x \|_\infty^4 \right)
\end{align*}
\end{proof}

For cleaner notation in the rest of the proof we will write simply $\E [ \cdot ] $ for $\E [\cdot|E_{i-1}, H_{i-1} ]$ when working with the matrix $M_i$. 

Next we show the normalizing factors $\| M_i y \|_2$ and $\|M_i^T x \|_2$ are concentrated at each step;  the $l_\infty$ norms of the $x^i$'s are bounded over all iterations, and the $l_\infty$ and $l_1$ norms of the $y^i$'s are bounded.  This proposition is critical in ensuring steady progress of our potential functions.

\begin{lem}
\label{OddNormalizeProp}
With probability $1-O \left (T n_1^{-1/6} \right )$, for all $i=1, \dots T$,
\begin{enumerate}

\item $\|M_i y^i\|_2^2 = (n_1 q \|y^i \|_2^2 + (\del -1)^2 n_1 q^2 (  v \cdot  y^i)^2)  (1+ o(1) )$
\item $ \| M_i^T x^i \|_2^2 = (n_2 q \|  x^i \|_2^2 +  (\del -1)^2  n_2 q^2 ( u \cdot  x^i)^2)  (1+ o(1)) $
\item $\|x^i \|_\infty \le n_1^{-1/3} $
\item $\|y^i \|_\infty \le n_2^{-1/4} n_1^{-1/12} $
\item $\| y^i \|_1 \le  4 \sqrt { n_2 n_1 q}$
\end{enumerate}
\end{lem}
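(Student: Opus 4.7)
The proof is by induction on $i = 1, 2, \dots, T$. The base case uses $\|x^0\|_\infty = 1/\sqrt{n_1} \le n_1^{-1/3}$. At each inductive step I condition on the event $H_{2i-2}$ (already known to hold whp), on a Chernoff-type ``good degree'' event for the freshly drawn graphs $G_{2i-1}$ and $G_{2i}$ (maximum $V_1$-degree $O(n_2 q \log n_1)$ and maximum $V_2$-degree $O(\log n_1)$ in each), and on the previously established bounds for $x^{i-1}$; a single union bound over the $T = O(\log n_2)$ iterations produces the claimed failure probability $O(T n_1^{-1/6})$.

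For items (1) and (2), I would apply Chebyshev's inequality using the mean and variance from Lemma~\ref{OddCalcProp} (items 5--6 for the means, 7--8 for the variances). Plugging the inductive hypotheses $\|x^{i-1}\|_\infty \le n_1^{-1/3}$, $\|y^{i-1}\|_\infty \le n_2^{-1/4}n_1^{-1/12}$, $\|y^{i-1}\|_1 \le 4\sqrt{n_1 n_2 q}$ into the variance expressions and comparing with the leading means $n_1 q$ and $n_2 q$ shows the standard deviation is an $o(1)$ fraction of the mean; this is a routine calculation.

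For items (3) and (4), I would apply Bernstein's inequality coordinatewise to the unnormalised vectors $M_{2i}y^i$ and $M_{2i-1}^T x^{i-1}$. For each fixed $\ell$, $(M_{2i}y^i)_\ell = \sum_j M_{2i,\ell,j}\, y^i_j$ is a sum of conditionally independent bounded random variables with variance at most $q\|y^i\|_2^2 = q$ and individual magnitude at most $\|y^i\|_\infty$; Bernstein yields a tail of order $\sqrt{q \log n_1} + \|y^i\|_\infty \log n_1$ above the small mean $\E[(M_{2i}y^i)_\ell] = O(q\sqrt{n_2})$. Taking a union bound over $n_1$ coordinates and dividing by $\|M_{2i}y^i\|_2 = \Theta(\sqrt{n_1 q})$ from item (1) gives the bound on $\|x^i\|_\infty$. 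The same argument, with a union bound over $n_2$ coordinates and normalisation by $\|M_{2i-1}^T x^{i-1}\|_2 = \Theta(\sqrt{n_2 q}) = \Theta((n_2/n_1)^{1/4})$, yields item (4); here the dominant term is the deterministic $\|x^{i-1}\|_\infty \log n_2$ contribution, which after division by $(n_2/n_1)^{1/4}$ produces exactly $n_2^{-1/4} n_1^{-1/12}$ up to polylogarithmic slack.

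For item (5) I would avoid the lossy Cauchy--Schwarz bound $\|y\|_1 \le \sqrt{n_2}$ by exploiting the centering. Write $M_{2i-1}^T x^{i-1} = A_{2i-1}^T x^{i-1} - q(\mathbf 1^T x^{i-1})\,\mathbf 1_{n_2}$. For the sparse part,
\[
\|A_{2i-1}^T x^{i-1}\|_1 \le \max_{\ell \in V_1} \deg_{G_{2i-1}}(\ell) \cdot \|x^{i-1}\|_1 \le O(n_2 q) \cdot \sqrt{n_1},
\]
using the conditioned-on max-degree bound and $\|x^{i-1}\|_1 \le \sqrt{n_1}\,\|x^{i-1}\|_2 = \sqrt{n_1}$. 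The centering term contributes at most $n_2 q \cdot |\mathbf 1^T x^{i-1}| \le n_2 q \sqrt{n_1}$. Summing and dividing by $\|M_{2i-1}^T x^{i-1}\|_2 = \Theta(\sqrt{n_2 q})$ gives the required $O(\sqrt{n_1 n_2 q})$.

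The main obstacle is the book-keeping needed to make the inductive loop self-sustaining: one must verify that in every regime of the ratio $n_2/n_1$ the Bernstein bound in item (4) is dominated by the $\|x\|_\infty \log n$ term (rather than the sub-Gaussian $\sqrt{q \log n}$ term), and that the bounds on $\|y\|_\infty$ and $\|y\|_1$ propagate through the variance expressions in Lemma~\ref{OddCalcProp} without degrading the concentration in items (1)--(2). Everything else reduces to plugging the inductive norm bounds into Lemma~\ref{OddCalcProp} and applying Chebyshev or Bernstein.
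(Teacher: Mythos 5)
Your treatment of items (1) and (2) (Chebyshev plus the moments of Lemma~\ref{OddCalcProp}) matches the paper. The serious gap is in (3) and (4): the Bernstein-style tail you use is off by a $\log$ factor, and because the parameters are tuned with essentially no slack, this compounds fatally through the induction. After normalizing, Bernstein gives
$\|x^{i+1}\|_\infty \lesssim \|y^i\|_\infty \log n_1/\sqrt{n_1 q}$ and $\|y^{i+1}\|_\infty \lesssim \|x^i\|_\infty \log n_2/\sqrt{n_2 q}$; composing,
\begin{align*}
\|x^{i+2}\|_\infty \;\lesssim\; \|x^i\|_\infty\,\frac{(\log n_1)(\log n_2)}{q\sqrt{n_1 n_2}} \;=\; \|x^i\|_\infty\,\frac{(\log n_1)(\log n_2)}{d},
\end{align*}
with $d = 100/(\del-1)^2$ a constant, so the $\ell_\infty$ norms grow geometrically and the claimed bounds cannot be maintained for $T = \Theta(\log n)$ iterations. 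Equivalently, the condition $\|y^i\|_\infty\log n_1 \lesssim n_1^{-1/3}\sqrt{n_1 q}$ with $\|y^i\|_\infty = n_2^{-1/4}n_1^{-1/12}$ reduces to $\log n_1 \lesssim \sqrt d$, which fails for any fixed $\del$. The paper sidesteps this by partitioning coordinates of the multiplying vector — dyadic bins by magnitude of $x^i_\ell$ for (4), and the two-part split of $y^i$ into the at most $3n_1 n_2 q$ ``large'' coordinates versus the rest via equation~(\ref{111}) for (3) — and then applying the bare binomial tail $\Pr[\mathrm{Binom}(t,q)\ge m] \le (etq/m)^m$ per bucket. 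Since $tq \ll 1$ at the relevant scales, this is already polynomially small with $m$ a \emph{constant} ($m\approx\sqrt d/4$ for the worst bin), not $m = \Omega(\log n)$; the $\Omega(\log n)$ ``outlier count'' that Bernstein's additive regime always pays is exactly what the binning avoids. A Poisson-type Chernoff or Bennett bound applied per bucket would also work, but a single Bernstein application over the whole coordinate sum cannot.

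For (5), your bound $\|A_{2i-1}^T x^{i-1}\|_1 \le (\max_\ell \deg_{G_{2i-1}}(\ell))\,\|x^{i-1}\|_1$ also loses a logarithm when $n_2 q = O(1)$ (i.e.\ $n_2 = \Theta(n_1)$), since the max $V_1$-degree is then $\Theta(\log n_1/\log\log n_1)$ rather than $O(n_2 q)$. The paper instead splits $V_2$ into zero-degree and positive-degree rows of $M_i^T$: the former are handled deterministically using the small magnitude $|y^i_j| \le \sqrt{2qn_1/n_2}$, and Cauchy--Schwarz over the latter (at most $3n_1 n_2 q$ of them, by (\ref{111})) against $\|y^i\|_2 = 1$ gives $\sqrt{3n_1n_2q}$ directly with no $\log$. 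Unlike (3)--(4), this particular $\log$ loss would likely be harmless in the downstream variance calculations of (1)--(2), but the paper's argument is both sharper and simpler — and note that the paper's proof does not in fact lean on the centering term the way you propose.
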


\begin{proof}
We begin by showing that
\begin{equation}\label{111}
\left | \left \{ j: |y^i_j| > \sqrt{ \frac{2 q n_1  }{n_2}}  \right \} \right | \le 3 n_2 n_1 q.
\end{equation}

We bound the number $L$ of $(1-q)$ entries  in $M_{i-1}$. $L$ is stochastically bounded by a $Binom(n_2 n_1, 2q)$ random variable, and so,

\begin{align*}
\Pr[ L \ge 3 n_2 n_1 q ] &\le e^{-qn_2n_1} = e^ {- \Theta(\sqrt {n_2 n_1}) }.
\end{align*}

The remaining entries have value $-q$.  If the $j$th row of $M_{i-1}$ has only $-q$ entries, then
\begin{align*}
|y^i_j| &\le \frac{ q \|x^{i-1}\|_1 }{\sqrt {n_2 q/2}} \\
&\le \sqrt{ \frac{2 q n_1  }{n_2}} \\
\end{align*}
using (2) inductively.   This proves (\ref{111}).

To prove (5), partition the coordinates of $y^i$ into two sets $\Delta$ and $\overline \Delta$, with $\Delta$ corresponding to rows of $M_{i-1}$ with every entry $-q$, and $\overline \Delta$ the rest.  Then
\begin{align*}
 \|y^i\|_1 & \le \sum_{j \in \Delta} |y^i_j| + \sum_{j \in \overline \Delta} |y^i_j| \\
 &\le \sqrt{ \frac{2 q n_1  }{n_2}} |\Delta| + \sqrt{ | \overline \Delta |}  \quad \text{ using part (2) inductively } \\
 &\le \sqrt{2  n_2 n_1 q} +\sqrt { 3 n_2 n_1 q} \\
 &\le  4 \sqrt {q n_2 n_1}
\end{align*}

We show by induction that whp the following hold for $i=1, \dots T$:
\begin{enumerate}
\item $\|M_i y^i\|_2^2 = \E \left( \|M_i y^i\|_2^2  \right ) (1 +O(n_1^{-1/8}) )$
\item $\|M_i^T x^i\|_2^2 = \E \left( \|M_i^T x^i\|_2^2  \right ) (1 +O(n_1^{-1/12}) )$
\item $\|x^i \|_\infty \le n_1^{-1/3} $
\item $\|y^i \|_\infty \le n_2^{-1/4} n_1^{-1/12} $
\end{enumerate}

Conditional on $y^i$ and $x^i$ respectively, we have
\begin{align*}
\E \left [  \|M_i y^i\|_2^2 \right ]   &= n_1q + (\del-1)^2 n_1q^2 ( v \cdot y^i)^2 + O(n_1q^2)   \\
\E  \left[ \|M_i^T x^i\|_2^2  \right ]  &= n_2q + (\del-1)^2 n_2q^2 (u \cdot x^i)^2 + O(n_2q^2)
\end{align*}

Using Chebyshev and part (3),
\begin{align*}
&\Pr \left[ \big | \|M_i y^i\|_2^2 - \E \left (  \|M_i y^i\|_2^2 \right) \big | > \alpha  \E \left (  \|M_i y^i\|_2^2 \right)   \right ]\\
 &\le \frac{ \var(\|M_i y^i\|_2^2   )   }{ n_1^2q^2  \alpha^2    } \\
&= \alpha^{-2} \cdot O \left ( \frac{ \| y\|_1 \cdot  \|y \|_\infty^3 + q^3 \|y \|_1^4 + q \|y\|_\infty^2  \cdot \|y\|_1^2 + q + q^2 \|y\|_1^2  }{n_1 q  }   \right)  \\
&= \alpha^{-2} \cdot O \left ( \frac{q^{1/2}  n_2^{-1/4} n_1^{1/4} + q^5n_2^2n_1^2 + q^2n_2^{1/2} n_1^{5/6}  + q + q^3 n_2 n_1  }{n_1 q  }   \right)  \\
&=  \alpha^{-2} \cdot O \left( q^{-1/2}  n_2^{-1/4} n_1^{-3/4} + q^4n_2^2n_1 + qn_2^{1/2} n_1^{-1/6}  + n_1^{-1} + q^2 n_2    \right ) \\
&= \alpha^{-2} \cdot O \left(  n_1^{-1/2} +n_1^{-1} + n_1^{-2/3}  + n_1^{-1} + n_1^{-1}    \right ) \\
&= O\left (  \frac{1}{  n_1^{1/2} \alpha^2}  \right) \\
&= O\left (     n_1^{-1/4} \right ) \quad \text{ for } \alpha = n_1^{-1/8}.
\end{align*}

Similarly, using Chebyshev and part (4),
\begin{align*}
&\Pr \left[ \big | \|M_i^T x^i\|_2^2 - \E \left (  \|M_i^T x^i\|_2^2 \right) \big | > \alpha  \E \left (  \|M_i^T x^i\|_2^2 \right)   \right ]\\
 &\le \frac{ \var(\|M_i^T x^i\|_2^2   )   }{ n_2^2q^2  \alpha^2    } \\
&= O \left( \frac{ n_1 \|x^i\|_\infty^4  }{ n_2 q \alpha^2  }  \right ) \\
&= O \left(\frac{ 1 }{ n_2 n_1^{1/3}  q \alpha^2  }  \right ) \\
&= O \left( \frac{n_1^{1/6} }{ n_2^{1/2} \alpha^2  }  \right ) \\
&= O\left (     n_1^{-1/6} \right ) \quad \text{ for } \alpha = n_1^{-1/12}.
\end{align*}

To prove (3), note that
\begin{align*}
\|x^{i+1}\|_\infty = \max_{j \in [n_1]} \frac{|(M^T_iy^i)_j|   }{ \|M^T_iy^i\|_2  }
\end{align*}

Using part (1), $\|M^T_i y^i\|_2 \ge \sqrt {n_1 q}/2 $ with probability $1- O(n_1^{-1/4})$.  Therefore it suffices to show that for every $j=1, \dots n_1$,
\[  |(M^T_iy^i)_j|  \le \frac{ n_1^{-1/3}  \sqrt {n_1 q} }{2} = \frac{ \sqrt d n_1^{1/6} n_2^{-1/4}  } { 2} . \]

To this end we will show that for any $j$,
\begin{equation}
\label{eq:OddXiprobBound}
\Pr \left[  |(M^T_iy^i)_j|  > \frac{ \sqrt d n_1^{1/6} n_2^{-1/4}  } { 2} \right ] \le \frac{1}{n_1^2}
\end{equation}

Again partition the coordinates of $y^i$, with $\Delta $ being the set of $j$ so that $|y^i_j| \le \sqrt{ \frac{2 q n_1  }{n_2}}$ and $\overline \Delta $ the rest.  The contribution to $ |(M^T_iy^i)_j|$ from $\Delta$ is bounded by
\begin{align*}
\left(n_2 q  + m_j \right)\sqrt{ \frac{2 q n_1  }{n_2}}
\end{align*}
where $m_j$ is the number of $1-q$ entries in the $j$th row of $M^T_i$. This number $m_j$ is dominated by a $Binom(n_2, 2q)$ random variable and so with probability $1 - \exp ( -n_2q)$, $m_j \le 3 n_2 q$. Therefore, the contribution from $\Delta$ is bounded by
\begin{align*}
\left( n_2 q + 3n_2q \right)\sqrt{ \frac{2 q n_1  }{n_2}} &\le 5 \sqrt {  n_2 n_1 q^3} \\
 &= O\left ( (n_2 n_1)^{-1/4}  \right )  = o( n_1^{1/6} n_2 ^{-1/4}).
\end{align*}
The contribution to $|(M^T_iy^i)_j|$ from $\overline \Delta$ is bounded by
\begin{align*}
\left(3 n_2 n_1 q \cdot q  +   \overline  m_j \cdot 1\right) n_2^{-1/4} n_1^{-1/12}
\end{align*}
where we have used (4) and (\ref{111})), and $\overline m_j$ is the number of $1-q$ entries in the $j$th row of $M^T_i$ whose column has index in $\overline \Delta$.  $\overline m_j$ is dominated by a $Binom( 3 n_2 n_1 q, q)$ random variable, and so with probability $1- O(\exp( -\Omega(3 n_2 n_1{13/12} q^2)))$, $\overline m_j \le 3 n_2 n_1 q^2 \cdot n_1^{1/12}$ in which case we have that the contribution from $\overline \Delta$ is bounded by
\begin{align*}
&3 n_2 n_1 q^2 n_2^{-1/4} n_1^{-1/12} + 3 n_2 n_1 q^2 n_2^{-1/4} \\
&= 3 d^2 n_2^{-1/4} n_1^{-1/12} + 3 d^2 n_2^{-1/4}\\
&\le \frac{ \sqrt d n_1^{1/6} n_2^{-1/4}  } { 4}
\end{align*}
proving inequality (\ref{eq:OddXiprobBound}). (We remark that for this part, the loose bounds we have above suffice; it is the next part that controls parameter settings).

To prove (4), set $\lam = n_2 ^{-1/4} n_1^{-1/12}$.
\begin{align*}
\|y^{i+1}\|_\infty = \max_{j \in [n_2]} \frac{|(M_i^T x^i)_j|   }{ \|M_i^T x^i\|_2  }
\end{align*}

Using part (2), $\|M_i^T x^i\|_2 \ge \sqrt {n_2 q}/2 $ with probability $1- O(n_1^{-1/6})$.  Therefore it suffices to show that for every $j=1, \dots n_2$,
\begin{align*}
 |(M_i^T x^i)_j|  &\le \frac{ \lam  \sqrt {n_2 q} }{2}  \\
 &=\frac{\lam \sqrt {d} n_2^{1/4}  }{ 2 n_1^{1/4}}
\end{align*}

We will show that for any $j$,
\begin{equation}
\label{eq:OddYiprobBound}
\Pr \left[  |(M_i^T x^i)_j|  > \frac{\lam \sqrt {d} n_2^{1/4}  }{ 2 n_1^{1/4}}  \right ] \le \frac{1}{n_2^2}
\end{equation}

We partition the coordinates of $x^i$ according to their magnitude, in bins $B_1, \dots B_L$, defined for $l <L$ as
\[
B_l = \left\{ i \, : \, |x_i| \in \left ( \frac{ n_1^{-1/3} }{2^l } , \frac{n_1^{-1/3}}{2^{l-1}}  \right ] \right\}  \]
with the interval for $B_L$ being $ [0, n_1^{-1/3}/2^{L-1}]$.  We set $L = \lceil \log ( n_1^{1/6}) \rceil$. Let
\[ t_l = | B_l| \le 2^{2l} n_1^{2/3} \]
using the fact that $x^i$ has unit $2$-norm.

We will bound the probability that bin $l$ contributes more than $\beta_l$ towards the value of $|(M_i^T x^i)_j|$, with
\[ \beta_l =  \frac{\lam \sqrt {d} n_2^{1/4}  }{ 4 n_1^{1/4} l^2}  \]
If all bins fall within these bounds, then
\begin{align*}
|(M_i^T x^i)_j| \le \sum_l \beta _l \le  \frac{\lam \sqrt {d} n_2^{1/4}  }{ 2 n_1^{1/4} }
\end{align*}
and therefore $\| y^{i+1}\|_\infty \le  n_2^{-1/4} n_1^{-1/12} $.

Let $Z_l \sim \text{Binom}(t_l,q)$. The contribution of  bin $l$ is bounded by the maximum of $  \frac{n_1^{-1/3}}{2^{l-1}} Z_l$  and $\left | q \sum_r x^i_r \right | \le q \sqrt {n_1}  \le \beta_l$.  To bound the first term, let
\begin{align*}
m_l &=   \beta_l 2^{l-1} n_1^{1/3} \\
&= \frac{\lam 2^l \sqrt {d} n_2^{1/4} n_1^{1/12}  }{ 8 l^2} \\
&= \frac{ 2^l \sqrt {d}  }{ 8 l^2}
\end{align*}
and consider
\begin{align*}
\Pr \left [ Z_l \ge m_l  \right ] &\le 2 \binom {t_l}{m_l} q^{m_l}   \\
&\le 2  \left( \frac{e t_l q }{m_l }  \right )^{m_l} \\
&\le 2  \left( \frac{e  2^{2l} n_1^{2/3} d (n_2 n_1)^{-1/2}  }{ \sqrt d 2^{l} / (8 l^2)}  \right )^{ \sqrt d 2^{l} / (8 l^2)} \\
&= 2  \left( \frac{8e l^2  2^{l} \sqrt d n_1^{1/6}  }{   n_2^{1/2}  }  \right )^{ \sqrt d 2^{l} / (8 l^2)} \\
&\le 2  \left( \frac{8e \sqrt d \log^2 (n_1)    n_1^{1/3}  }{   n_2^{1/2}  }  \right )^{ \sqrt d 2^{l} / (8 l^2)} \\
&\le 2  \left(  \frac{8e \sqrt d \log^2 (n_1)      }{ n_2^{1/6}  }   \right ) ^{\sqrt d /4} \\
&\le \frac{(8e  \sqrt d \log^2 (n_1)   )^7   }{n_2^{7/3} } \quad \text{ for } \sqrt d /4 \ge 7.
\end{align*}
   Taking a union bound over all $L$ bins, we have (\ref{eq:OddYiprobBound}).

\end{proof}

Next we show that the vector $y^i$ reaches high correlation with $ v$ after $T/2 $ steps.  Recall the definitions $V_i :=  v \cdot y^i$ and $U_i:= u \cdot x^i$.

\begin{prop}
\label{xiyievolve}
With probability $1-O(( \ln n_1)^{-2})$,  one of the following happens:
\begin{enumerate}
\item  For all $i \in \{T/2  , \dots  T\}$,
\[ V_i \ge \frac{(\del -1) \sqrt { n_2 n_1 q}}{4} \]
\item For all $l \in \{T/2 , \dots T \}$,
\[ V_i \le - \frac{(\del-1) \sqrt { n_2 n_1 q}}{4} \]
\end{enumerate}

\end{prop}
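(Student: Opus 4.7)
The plan is to follow the scalar potentials $U_i = u\cdot x^i$ and $V_i = v\cdot y^i$ across the iterations using Lemmas~\ref{OddCalcProp} and~\ref{OddNormalizeProp}. Dividing the unnormalized updates by the concentrated norms yields the conditional recursion
\begin{align*}
V_{i} &= (\delta-1)\sqrt{n_2 q}\,(1+o(1))\, U_{i-1} + \xi_i,\\
U_{i} &= (\delta-1)\sqrt{n_1 q}\,(1+o(1))\, V_{i} + \eta_i,
\end{align*}
with $\xi_i,\eta_i$ conditionally mean-zero of variance $O(1)$. The choice $q = 100/((\delta-1)^2\sqrt{n_1 n_2})$ yields $(\delta-1)^2 q\sqrt{n_1 n_2} = 100$, so one full iteration multiplies the signal by $100$ while the additive noise stays $O(1)$. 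Moreover, the target $(\delta-1)\sqrt{n_1 n_2 q}/4 = 2.5 (n_1 n_2)^{1/4}$ is precisely the order at which the second term in the norm expressions of Lemma~\ref{OddNormalizeProp} catches up with the first and $|V_i|$ plateaus; so it is a saturation threshold rather than a runaway goal.

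I would handle the initialization first. Since $x^0 = x/\sqrt{n_1}$ with $x$ a uniform $\pm 1$ vector, $U_0 = n_1^{-1/2}\sum_j u_j x_j$ is a normalized Rademacher sum, so Berry--Esseen gives $|U_0|\ge (\log n_1)^{-2}$ outside an event of probability $O((\log n_1)^{-2})$. By symmetry WLOG $U_0 > 0$; the opposite case leads to conclusion~(2) of the proposition. I would then split $\{0,\dots,T/2\}$ into a short bootstrap stage and a longer amplification stage. In the bootstrap stage, while $|U_i|$ is at most a fixed constant, noise is comparable to signal and no deterministic multiplicative growth is available. Chebyshev applied to the recursion nevertheless gives $\Pr[|U_{i+1}| \ge 2|U_i| \mid U_i] \ge 1/2$ for any $U_i$, and anti-concentration of $\xi_i,\eta_i$ (each a sum of many independent bounded contributions) prevents $|U_{i+1}|$ from collapsing to $0$ with non-negligible probability. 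A Chernoff bound on this random walk then shows that within $O(\log\log n_1)$ iterations there is a run of doublings long enough to push $|U_i|$ above $\log n_1$, with failure probability $O((\log n_1)^{-2})$ after a suitable choice of the constant. Sign flips may occur during the bootstrap; this is harmless since only the eventual sign after saturation matters.

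In the amplification stage, once $|U_i| \ge \log n_1$, the signal exceeds noise by a polylogarithmic factor, and Chebyshev (or higher-moment Bernstein) gives $|U_{i+1}|\ge 50|U_i|$ and $\mathrm{sign}(U_{i+1}) = \mathrm{sign}(U_i)$ per step with failure probability $1/\mathrm{poly}(\log n_1)$. After $O(\log n_1)$ such steps---still well before iteration $T/2$---the iterate saturates: the second term in Lemma~\ref{OddNormalizeProp} catches up with the first, and $|V_i|$ crosses the target $2.5(n_1 n_2)^{1/4}$ with a now-fixed sign. For $i\in\{T/2,\dots,T\}$, the saturation regime is self-preserving: the same recursion keeps $|V_{i+1}|$ at or above the target and preserves its sign with per-step failure $1/\mathrm{poly}(\log n_1)$, and a union bound over the remaining $T/2 = O(\log n_1)$ iterations absorbs the total failure.

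The hard part is the bootstrap stage: neither magnitude nor sign is monotone, and noise and signal are of the same order, so one must control a genuinely stochastic multiplicative walk rather than drive a deterministic drift. The remaining stages are essentially mechanical applications of the moment estimates in Lemmas~\ref{OddCalcProp} and~\ref{OddNormalizeProp} combined with union bounds over the $O(\log n_1)$ iterations, and rely crucially on the near-independence of the $M_i$ conditional on $(E_{i-1}, H_{i-1})$ already quantified in those lemmas.
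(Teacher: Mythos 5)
Your overall route is the same as the paper's: track the potentials $U_i$ and $V_i$, bootstrap by anti-concentration, amplify by Chebyshev doubling, then saturate and maintain the sign, all powered by Lemmas \ref{OddCalcProp} and \ref{OddNormalizeProp}. The genuine gap is in the bootstrap stage, which you yourself flag as the hard part. With the per-step guarantee you actually have --- $\Pr[\,|U_{i+1}|\ge 2|U_i| \mid U_i\,]\ge 1/2$, and this really is only a constant, since both the concentration of the intermediate $V$-step and the anti-concentration of the $O(1)$ noise fail with constant probability when $|U_i|$ is of constant order --- you cannot conclude that $|U_i|$ exceeds $\log n_1$ within $O(\log\log n_1)$ iterations with failure probability $O((\log n_1)^{-2})$. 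A single bad step can destroy essentially all accumulated progress (it can send $|U_{i+1}|$ to $o(1)$), so the relevant event is a run of roughly $\log_2\log n_1$ consecutive good steps; within only $O(\log\log n_1)$ trials such a run occurs with probability $o(1)$, not $1-O((\log n_1)^{-2})$, and no Chernoff bound on the walk repairs this because the downward steps are unbounded on the logarithmic scale. The paper's Phase 1 instead allots $\Theta(\log n_1)$ iterations and uses the regeneration property (Proposition \ref{chebPropOdd}, part 1): after any step, $|U_{i+1}|\ge 1/4$ with probability at least $1/2$ regardless of the past, so one waits for a run of $\log\log n_1$ good steps among $\log n_1$ attempts, which fails with probability $o((\log n_1)^{-2})$. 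Your Berry--Esseen initialization of $U_0$ is then unnecessary (typically $|U_0|=\Theta(1)$ anyway, and the regeneration property makes the starting value irrelevant).

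A second, smaller accounting issue: in the amplification stage, a flat union bound of per-step failures of size $1/\mathrm{poly}(\log n_1)$ over $\Theta(\log n_1)$ steps gives only $O(1/\log n_1)$ if the exponent is $2$, which is weaker than the claimed $O((\ln n_1)^{-2})$. The paper gets the stated bound because the Chebyshev failure at level $|U_i|$ is $O(|U_i|^{-2})$ and $|U_i|$ doubles each step, so the failures form a geometric series dominated by its first term $O((\ln n_1)^{-2})$ (Proposition \ref{chebPropOdd}, part 3). Likewise, in the saturation/maintenance regime the per-step failure is polynomially small, of order $1/\sqrt{n_1 n_2}$ (Proposition \ref{chebPropOdd}, parts 4 and 5), not merely polylogarithmically small; that is what allows the union bound over the last $T/2$ iterations to be absorbed into the overall error. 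With the bootstrap rewritten along the paper's run-of-good-steps lines and the error bookkeeping done via the geometric series, your argument coincides with the paper's proof.
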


First we need the following bounds on the progress of $U_i$:

\begin{prop}
\label{chebPropOdd}
The following bounds on $U_i$ hold:
\begin{enumerate}

\item With probability at least $1/2$, $|U_{i}| \ge 1/4$ regardless of the value of $V_i$.

\item If $1/4 \le |U_{i}| \le  \ln n_1$, then with probability at least $1/2$, $|U_{i+1}| \ge 2 |U_i|$.
\item $ \Pr \left [| U_{i+1}| \ge 2 |  U_i |  \right ]  \ge 1 - \frac{1}{| U_i|^2}  $ for $ \ln n_1 \le | U_{i} | \le \sqrt {n_1} /4$.
\item If  $  U_{i}  \ge \sqrt {n_1} /4$, then  $\Pr \left [  U_{i+1} \ge \sqrt {n_1} /2  \right ]  \ge 1- O(1/\sqrt{n_1 n_2})$. Similarly, if $  U_{i}  \le - \sqrt {n_1} /4$, then  $\Pr \left [ U_{i+1} \le - \sqrt {n_1} /2  \right ]  \ge 1- O(1/\sqrt{n_1 n_2})$.
\item If $U_i \ge \sqrt {n_1}/2$, then $V_{i+1} \ge \frac{(\del -1) \sqrt { n_2 n_1 q}}{4}$ with probability $1-O(1/\sqrt{n_1 n_2})$.
\end{enumerate}
\end{prop}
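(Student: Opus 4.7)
The plan is to combine the conditional mean/variance estimates of Lemma \ref{OddCalcProp} with the normalization and $\ell_p$ bounds of Lemma \ref{OddNormalizeProp} to derive a one-step noisy recurrence for the signal $U_i$ (via the intermediate $V_{i+1}$), and then verify each of the five parts by a Chebyshev-type estimate in the appropriate regime. Conditional on $x^i$ and $y^{i+1}$ respectively, the numerators can be written
\[ v \cdot M^T_{2i+1} x^i \;=\; (\delta-1) n_2 q\, U_i \;+\; Z_1, \qquad u \cdot M_{2i+2} y^{i+1} \;=\; (\delta-1) n_1 q\, V_{i+1} \;+\; Z_2, \]
where $\E[Z_j]=0$, $\var[Z_1]=n_2 q(1+o(1))$, $\var[Z_2]=n_1 q(1+o(1))$, and the additive errors from Lemma \ref{OddCalcProp} involving $\|x\|_\infty,\|y\|_\infty,\|y\|_1$ are negligible by Lemma \ref{OddNormalizeProp}(3)--(5). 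Lemma \ref{OddNormalizeProp}(1)--(2) pins down the normalizing denominators as $\sqrt{n_2 q(1+(\delta-1)^2 q U_i^2)}$ and $\sqrt{n_1 q(1+(\delta-1)^2 q V_{i+1}^2)}$ to within $1+o(1)$. With $q=d/\sqrt{n_1 n_2}$ and $(\delta-1)^2 d=100$, composition collapses, in the linear regime $(\delta-1)\sqrt q\,|U_i|\ll 1$, to the one-step amplification $U_{i+1}=100\,U_i+O(1)$ with constant-variance noise.

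Claims (1)--(3) all live in this linear regime. For (1), the worst case is $V_i\approx 0$: then $u\cdot M y^i$ is mean-zero with variance $\approx n_1 q$ while $\|M y^i\|_2\approx\sqrt{n_1 q}$, so $U_i$ has variance $\approx 1$, and the second-moment method (Paley--Zygmund, with $\E[U_i^2]\gtrsim 1$ and $\E[U_i^4]=O(1)$) yields $|U_i|\ge 1/4$ with probability $\ge 1/2$. Claim (2) follows immediately from the recurrence: the mean $100\,U_i$ is at least $25$, while the target $2|U_i|$ is at most $2\ln n_1$, so Chebyshev on the $O(1)$-variance noise gives probability $1-o(1)\ge 1/2$. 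Claim (3) repeats the argument with $|U_i|\ge \ln n_1$: the mean--target gap is $98|U_i|$ against $O(1)$ variance, so Chebyshev delivers failure probability $O(1/|U_i|^2)$.

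Claims (4) and (5) concern the saturation regime $|U_i|\asymp\sqrt{n_1}$, where the denominators $\sqrt{1+(\delta-1)^2 q U_i^2}$ and $\sqrt{1+(\delta-1)^2 q V_{i+1}^2}$ are no longer close to $1$ and the recurrence becomes nonlinear. Plugging the full formula in, one checks that conditional on $U_i\ge\sqrt{n_1}/4$ (resp.\ $U_i\ge\sqrt{n_1}/2$), the conditional mean of $U_{i+1}$ (resp.\ $V_{i+1}$) exceeds the target $\sqrt{n_1}/2$ (resp.\ $(\delta-1)\sqrt{n_1 n_2 q}/4$) by a signal of order $\sqrt{n_1}$ (resp.\ $(\delta-1)\sqrt{n_1 n_2 q}$), while the $Z_j$ noise contributes only $O(1)$ after normalization. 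Chebyshev on the deviation from the mean then gives failure probability of order $\var/\mathrm{gap}^2=O(1/(n_1 q\cdot n_1))=O(1/((\delta-1)^2 d\,\sqrt{n_1 n_2}))=O(1/\sqrt{n_1 n_2})$ as claimed. The main obstacle is carefully shepherding the various $(1+o(1))$ factors and the $\|\cdot\|_\infty$- and $\|\cdot\|_1$-dependent error terms from Lemma \ref{OddCalcProp} through the nonlinear composition so that the explicit constants $1/4$, $1/2$ survive uniformly in the ratio $n_1/n_2\in(0,1]$; this is exactly what the sharp bounds of Lemma \ref{OddNormalizeProp}(3)--(5) allow us to do.
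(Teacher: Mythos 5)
Your overall strategy—reading off the conditional first and second moments from Lemma \ref{OddCalcProp}, pinning down the normalizing denominators and $\ell_\infty$/$\ell_1$ control from Lemma \ref{OddNormalizeProp}, and then applying Chebyshev in the appropriate regime—is the same as the paper's. Your presentation of parts (2)--(5) as a composed one-step recurrence $U_{i+1}\approx (\delta-1)^2 q\sqrt{n_1n_2}\,U_i + \text{noise}$ (with amplification factor $100$ and noise variance $(\delta-1)^2 n_1 q + 1 \le 101 = O(1)$ since $n_1\le n_2$) is a clean repackaging of the paper's two-half-step argument, and the conditioning on $x^i$ and then on $y^{i+1}$ keeps the two noise contributions decoupled, so that composition is legitimate.

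There is, however, a genuine gap in part (1). Paley--Zygmund applied to $U_i^2$ with $\E[U_i^2]\approx 1$ gives
$\Pr[\,U_i^2 \ge 1/16\,] \ge (1-1/16)^2 \,(\E U_i^2)^2 / \E[U_i^4]$, and to reach the stated constant $1/2$ you would need $\E[U_i^4] \le 2(15/16)^2 \approx 1.76$. But in the worst case $V_i\approx 0$, the numerator $u\cdot M_iy^i$ is a sum of many small independent contributions and is approximately Gaussian, so $\E[U_i^4]\to 3$; Paley--Zygmund then only yields $\Pr[\,|U_i|\ge 1/4\,]\gtrsim 0.29$, short of $1/2$. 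This is precisely why the paper invokes a Berry--Esseen normal approximation here: for a (near-)Gaussian of unit variance, $\Pr[\,|Z|\ge 1/4\,]\approx 2(1-\Phi(1/4))\approx 0.80$, comfortably above $1/2$ even after absorbing the CLT error. The second-moment method is the wrong tool for this particular anti-concentration bound, even though a weaker positive constant (which is what Phase~1 of Proposition~\ref{xiyievolve} really uses) would be enough. You would also need to actually supply the fourth-moment bound on $u\cdot M_iy^i$ you invoke, which is not literally contained in Lemma \ref{OddCalcProp}.

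Two minor points: in the Chebyshev computation for parts (4)--(5) the intermediate expression $O(1/(n_1q\cdot n_1))$ should read $O(1/(n_2q\cdot n_1))=O(1/(d\sqrt{n_1n_2}))$; the displayed chain of equalities does not hold as written, though the final bound $O(1/\sqrt{n_1n_2})$ is correct. And in part (2), the relevant statement is $|U_{i+1}|\ge 2|U_i|$ not $U_{i+1}\ge 2U_i$; your Chebyshev argument should control both the deviation below $2|U_i|$ on the correct-sign side and the possibility of landing below $-2|U_i|$, which it does since both gaps are $\Theta(|U_i|)\ge \Theta(1)$ against $O(1)$-variance noise.
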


1) and 2) ensure that Phase 1 succeeds, and that $U_i$ attaints value $ \ln n_1$ within $\ln n_1$ steps. 3) and 4) ensure steady progress in Phase 2 and that once $U_i$ attains a high value, it maintains it. 5) connects the two potential functions by showing that $V_{i+1}$ is large if $U_i$ is large.

\begin{proof}[Proof of Proposition \ref{chebPropOdd}]

1. The variance of $u \cdot (M_i y^i)$ is $ \sim n_1q$, and so  a Berry-Esseen bound gives that with probability at least $1/2$, $|u \cdot M_i y^i| \ge \sqrt{ n_1 q}/4$.  Then using Lemma \ref{OddNormalizeProp}, we have that $||M_i y^i||_2 = \sqrt {n_1 q}(1+o(1))$ whp, and so with probability at least $1/2$, $|U_{i}|=|u \cdot x^{i}| \ge 1/4$.

2.   We prove this in two steps. The expectation of $ v \cdot (M_i^T x^i)$ is $(\del -1)n_2 q( u \cdot x^i)$, with variance $n_2 q$.  Both are $\omega(1)$, and the expectation is at least $(\del - 1)/4$ times the variance in absolute value, and so whp, $v \cdot (M_i^T x^i) = (\del -1)n_2 q(u \cdot x^i) (1+o(1))$. Using Lemma \ref{OddNormalizeProp} again, we have that whp, $V_{i+1} = (\del -1) \sqrt {n_2 q} (u \cdot x^i)$.

Conditioning on this value, we have
\[
\E[u \cdot (M_{i+1} y^{i+1})] = (\del-1)^2 \sqrt{n_2 q} n_1 q (u \cdot x^i)(1+o(1))
\]
and its variance is $n_1 q$.  With probability $1/2$ we have $|u \cdot (M_{i+1} y^{i+1})|\ge  (\del-1)^2 \sqrt{n_2 q} n_1 q (u \cdot x^i)(1-o(1))$, and then normalizing with Lemma \ref{OddNormalizeProp} we have $|U_{i+1}| \ge (\del-1)^2 \sqrt{n_2 n_1} q |U_i|$, which from our choice of $q$, is at least $2 | U_i|$.

3. Similar to the above.  Apply Chebyshev so that $v \cdot (M_i^T x^i) = (\del -1 )n_2 q (u \cdot x^i)(1+o(1))$ with probability $1- o(1)$, and normalize so that $v \cdot y^{i+1} = (\del -1)\sqrt {n_2 q} (u \cdot x^i) (1+o(1))$ whp.  Now the expectation of $u \cdot (M_{i+1} y^{i+1})$ is $(\del-1)^2 \sqrt{n_2 q} n_1 q (u \cdot x^i)(1+o(1))$ with variance $n_1 q$, and so applying Chebyshev, we have
\begin{align*}
\Pr [ |u \cdot (M_{i+1} y^{i+1})| < (\del-1)^2 \sqrt{n_2 q} n_1 q |u \cdot x^i|/2] &\le \frac{n_1 q}{ (\del-1)^4 n_2  n_1^2 q^3 (u \cdot x^i)^2/4} \\
&= \frac{4}{ (\del-1)^4 n_2  n_1 q^2 (u \cdot x^i)^2} \\
&\le \frac{1}{25 (u \cdot x^i)^2}
\end{align*}

Then normalizing, and using Lemma \ref{OddCalcProp} and part (2) above, we get
\begin{align*}
|U_{i+1}| & \ge \frac{ (\del-1)^2 \sqrt{n_2 q} n_1 q |U_i|   }{ 2 \sqrt{ n_1 q + (\del-1)^2 n_1 q^2 (V_{i+1})^2   }    } \\
& \ge \frac{ (\del-1)^2 \sqrt{n_2 q} n_1 q |U_i|   }{ 2 \sqrt{ n_1 q + (\del-1)^4 n_2 n_1 q^3 (U_i)^2   }    } \\
&\ge 2 | U_i|.
\end{align*}

4,5.  Chebyshev again.

\end{proof}

\begin{proof}[Proof of Proposition \ref{xiyievolve}]

  In the first phase, we show that it takes $ \ln n_1$ iterations for $|U_i|$ to reach $ \ln n_1$ whp.  Next, it takes a further $\ln n_1$ iterations to reach $\sqrt {n_1}/2$.  Finally, $|U_i|$ will remain above $\sqrt {n_1}/2$ whp for an additional $2 \ln n_1$ iterations.

Step 1: We call a step from $U_i$ to $U_{i+1}$ `good' if $|U_{i+1}| \ge 2 | U_i|$, or if $|U_{i+1}| \ge 1/4 $ following a `bad' step.  A run of $\ln \ln n_1$ good steps must end with $|U_i| \ge  \ln n_1$.  As long as $|U_i| <  \ln n_1$, the proposition above shows that the probability of a good step is at least $1/2$, so in $ \ln n_1$ steps, with probability $1-o(1)$ we will either have such a run of $\ln \ln n_1$ good steps or reach $ \ln n_1$ even earlier.

Step 2:  Once we have $|U_i| \ge  \ln n_1$, the value will double whp in successive steps until $|U_i| \ge \sqrt {n_1}/4$.  This takes at most $\ln n_1$ steps.  The total error probability, by part 3) of Proposition \ref{chebPropOdd} is a geometric series that sums to $O(1/ ( \ln n_1)^2)$.

Step 3:  Once $|U_i| \ge \sqrt{n_1}/{4}$ then for the next $2 \ln n_1$ steps, $U_{i+1}, U_{i+2}, \dots$, we have $|U_i| \ge \sqrt{n_1}/{2}$, with total error probability $O(T/\sqrt{n_1n_2})$.

Step 4: Finally we use  part 5) of Proposition \ref{chebPropOdd} to conclude that $y^i$ has high correlation with $v$.

\end{proof}

We now use Proposition \ref{xiyievolve} to prove the main theorem.

\begin{proof}[Proof of Theorem \ref{thm:block}]

Now that we know whp $y^{T/2 }, y^{T/2 +1}, \dots$ all have large correlation with $v$, we show that taking the majority vote for each coordinate of $z^{T/2 +1}, z^{T/2+2}, \dots$ recovers $\pm u$ whp.

Take the first case from Proposition \ref{xiyievolve}, with $V_i \ge \frac{(\del -1) \sqrt { n_2 n_1 q}}{4} $.  Assume $j \in A_1$, then we have, conditioned on the value of $V_i$

\begin{align*}
\Pr[z^{i+1}_j = 1] &= \Pr[ x_j > 0] \\
& \ge 1- \frac{\var((My^i)_j)}{ ( \E( (My^i)_j ) )^2 }  \\
& \ge 1 - \frac{ 32q}{ (\del-1)^4 q^3 n_1 n_2 } \\
&=   1 - \frac{ 32}{ 100^2 } \ge .9
\end{align*}
Now an application of Azuma's inequality shows that with probability at least $1- o(n_1^{-2})$, $\sum _{i = T/2}^T z^i_j >0$.  Similarly, for $j \in B_1$, we have $\sum _{i = T/2}^T z^i_j >0$ with probability at least $1-o(n_1^{-2})$, and so whp the majority vote recovers $u$ exactly.  The same argument shows that if the second case of Proposition \ref{xiyievolve} holds, then we find $-u$ whp.

\end{proof}

\begingroup
\raggedright
\sloppy
\bibliographystyle{plain}	
\bibliography{planted}
\endgroup


\end{document}